\setlist{itemsep=0pt,topsep=4pt}
\newcommand{\comment}[4][inline]{
  \ifthenelse{\equal{#1}{margin}}
  {
    \marginpar{\scriptsize \hrule\noindent
    {\bf #2:\\} {\em\textcolor{#3}{#4}} \hrule}
  }
  {
    \hrule\noindent {\bf #2:} {\em\textcolor{#3}{#4}} \hrule
  }
}
\newcommand{\longversion}[1]{#1}
\renewcommand{\phi}{\varphi}
\newcommand{\powerset}[1]{\mathscr{P}(#1)}
\newcommand{\M}{\mathcal{M}}
\newcommand{\A}{\mathcal{A}}
\newcommand{\I}{\mathcal{I}}
\newcommand{\T}{\mathcal{T}}
\renewcommand{\O}{\mathcal{O}}
\newcommand{\IS}{\mathcal{MP}_N}
\newcommand{\B}{B} % Not sure how to call processes,
\newcommand{\W}{W} % P/Q or B/W ? or P_0, P_1, etc.
\newcommand{\vertices}[1]{\mathcal{V}(#1)}
\newcommand{\facets}[1]{\mathcal{F}(#1)}
\newcommand{\emptyview}{\Box}
\newcommand{\Agents}{\mathrm{Ag}}
\newcommand{\Atom}{\mathrm{At}}
\newcommand{\inputprop}[2]{\mathsf{input}_{#1}^{#2}}
\newcommand{\decideprop}[2]{\mathsf{decide}_{#1}^{#2}}
\newcommand{\pre}{\mathsf{pre}}
\newcommand{\extended}{\widehat}
\newcommand{\ExtAtom}{\extended{\Atom}}
\newcommand{\ExtProd}[2]{\extended{#1[#2]}}
\newcommand{\ExtO}{\extended{\O}}
\newcommand{\inputs}{\mathit{I}}
\def\cA{{\cal A}}
\def\cF{{\cal F}}
\def\cG{{\cal G}}
\def\cH{{\cal H}}
\def\cI{{\cal I}}
\def\cK{{\cal K}}
\def\cO{{\cal O}}
\def\cP{{\cal P}}
\def\cT{{\cal T}}
\def\cV{{\cal V}}
\def\cM{{\cal M}}
\def\cI{{\cal I}}
\def\Gz{\cI}
\newcommand{\set}[1]{\left\{ #1 \right\}}
\newcommand{\view}{\mathsf{view}}
\newcommand{\var}[1]{\lstinline+#1+}
\newcommand{\vs}{s}
\newcommand{\Vin}{\ensuremath{V^\textit{in}}}
\newcommand{\Vout}{\ensuremath{V^\textit{out}}}
\def\cA{{\cal A}}
\def\cV{{\cal V}}
\def\cF{{\cal F}}
\newcommand{\la}{\langle}
\newcommand{\ra}{\rangle}
\newcommand{\Values}{\mathit{Val}}
\newcommand{\Lcal}{\mathcal{L}}
\def\cK{{\cal K}}
\begin{document}
\title{A dynamic epistemic logic analysis of the equality negation task}
%
%\titlerunning{Abbreviated paper title}
% If the paper title is too long for the running head, you can set
% an abbreviated paper title here
%
\author{\' Eric Goubault\inst{1} \and Marijana Lazi\' c\inst{2} \and J\' er\' emy Ledent\inst{1} \and Sergio Rajsbaum\inst{3}}
\authorrunning{E. Goubault et al.}
% First names are abbreviated in the running head.
% If there are more than two authors, 'et al.' is used.
%
\institute{LIX, CNRS, \'Ecole Polytechnique, IP-Paris \and
TU M\" unchen \and
Instituto de Matem\'aticas, UNAM
}
\maketitle              % typeset the header of the contribution
\begin{abstract}
In this paper we study the solvability of the \emph{equality negation} task in a simple wait-free
model where processes communicate by reading and writing shared variables or exchanging messages. 
 In this task, two processes start with a private input value in the set  $\set{0,1,2}$,
 and after communicating, each one must  decide a binary output value,  so that the outputs of the processes are the same if and only if the input values of the processes are different.
This task is already known to be unsolvable; our goal here is to prove this result using the dynamic epistemic logic (DEL) approach introduced by Goubault, Ledent and Rajsbaum in GandALF 2018.
We show that in fact, there is no epistemic logic formula that explains why the task is unsolvable. We fix this issue by extending the language of our DEL framework, which allows us to construct such a formula, and discuss its utility.
%We study the unsolvability of this task using the version of dynamic epistemic logic (DEL) based on simplicial complexes introduced by Goubault, Ledent and Rajsbaum in GandALF 2018, and show that there is no formula that explains why the task is unsolvable. Moreover, we identify the minimal extension of DEL that allows us to construct such a formula, and discuss its utility.
\keywords{Dynamic Epistemic Logic  \and Distributed computing \and Equality negation.}
\end{abstract}

\section{Introduction}
\subsubsection{Background.}
Computable functions are the basic objects of study in computability theory. A function is computable if there exists a Turing machine which, given an input of the function domain, returns the corresponding output. If instead of one Turing machine, we have many, and each one gets only one part of the input, and should compute one part of the output, we are in the setting of \emph{distributed computability}, e.g.~\cite{AW04,Lynch96}. 
\longversion{
  The sequential machines are called \emph{processes}, and are  allowed to be
  infinite state machines,
  to concentrate on the interaction aspects of computability, disregarding
  sequential computability issues.
}
The notion corresponding to a function is a \emph{task}, roughly, the domain is a set of input vectors, the range is a set of output vectors, and the task specification $\Delta$ is an input/output relation between them.
An input vector~$I$ specifies in its $i$-th entry the (private) input to the $i$-th process,
and an output vector $O\in\Delta(I)$ states that it is valid for each process $i$ to produce as output the $i$-th entry of~$O$, whenever the input vector is~$I$. 
An important example of a task is \emph{consensus}, where each process is given an input from a set of possible input values, and the participating processes have to agree on one of their inputs.

A \emph{distributed computing model} has to specify various details related to how the processes communicate
with each other and what type of failures may occur.
It turns out that different models may have different power, i.e., solve different sets of tasks.
In this paper we consider the \emph{layered message-passing model}~\cite{HerlihyKR:2013}, both because of its
relevance to real systems, and because it is the basis to study task computability. This simple, wait-free round-based
model where messages can be lost, is described in Section~\ref{sec:prelim}.
%It is a  wait-free model, in the sense  that all interleavings of the operations by the processes correspond to possible executions of the system, and executions proceed in synchronous rounds.
%Also, it means that any number of process crash failures have to be tolerated, in the sense that a process executes its own program, and terminates, independently of the speed at with other processes are executing their own programs, perhaps even stopping altogether.
 
The theory of distributed computability has been well-developed since the early 1990's~\cite{HerlihyS99}, with origins even before~\cite{BiranMZ90,FischerLP85},
 and  overviewed in a book~\cite{HerlihyKR:2013}.
It was discovered that the reason for why a task may or may not be computable is of a topological nature.
 The input and output sets of vectors are best described as \emph{simplicial complexes},
  and a task can be specified by a relation $\Delta$ from the input complex $\mathcal I$ to the output complex $\mathcal O$.
 The main result is that a task is solvable in the layered message-passing model  if and only if 
 there is a certain subdivision of the input complex ${\mathcal I}$ and a certain simplicial map $\delta$ to the output complex ${\mathcal O}$, that respects the specification $\Delta$. 
\longversion{ This is why the layered message-passing model is fundamental; models  that can solve more tasks than
 the layered message-passing model preserve the topology of the input complex less precisely (they introduce
 ``holes'').
 }

\subsubsection{Motivation.}
We are interested in understanding distributed computability from the epistemic point of view.
What is the knowledge that the processes should  gain, to be able to solve a task?
This question began to be addressed in~\cite{gandalf2018}, using dynamic epistemic logic (DEL).
Here is a brief overview of the approach taken  in~\cite{gandalf2018}.
A new \emph{simplicial complex model} for a multi-agent system was introduced, 
instead of the usual Kripke epistemic $S5$ model based on graphs.
Then, the initial knowledge of the processes is represented by a simplicial model,
%overloading notation, 
denoted as~$\cI$, based on 
the input complex of the task to be solved. The distributed computing model is represented by an action model $\cA$,
and the knowledge at the end of the executions of a protocol is represented by 
the product update~$\cI [\cA ]$, another simplicial model.
%Recall that
%a {morphism} $\delta$ of simplicial models  is a chromatic simplicial map that preserves the labeling: $\ell'(f(v)) = \ell(v)$.
%Also recall that the product update model $\Gz[\cA]$ is a sub-complex of the cartesian product $\Gz \times \cA$, whose vertices are of the form $(i,ac)$ with $i$ a vertex of $\Gz$ and $ac$ a vertex of $\cA$. We write $\pi_\Gz$ for the first projection on $\Gz$, which is a morphism of simplicial models.
%\begin{definition}
%\label{thm:Kripketasksolv2}
Remarkably, the task specification is also represented by an action model $\cT$, and the
product update gives a simplicial complex model~$\cI [\cT ]$ representing the knowledge that should be acquired,
by a protocol solving the task.
 The task ${\mathcal{T}}$ is \emph{solvable} in $\mathcal{A}$ 
 %\ml[margin]{We should state what are $\mathcal{A}, \Gz[\cA], \Gz[{\mathcal{T}}]$.}
 whenever there exists a morphism $\delta: \Gz[\cA]\rightarrow \Gz[{\mathcal{T}}]$ such that
%$\pi_I\, \circ\, \delta=\pi_I$, i.e., 
the diagram of simplicial complexes below commutes.
%\end{definition}
\begin{wrapfigure}[8]{r}{0.2\textwidth}%   \centering
%\begin{wrapfigure}{r}{-0.4\textwidth}%   \centering
 \centering
 \vspace{-22pt}
\begin{tikzpicture}
  \node (s) {$\Gz[\cA]$};
  \node (xy) [below=2 of s] {$\Gz[\mathcal{T}]$};
  \node (x) [left=of xy] {$\Gz$};
  \draw[<-] (x) to node [sloped, above] {$\pi$} (s);
  \draw[->, dashed, right] (s) to node {$\delta$} (xy);
  \draw[->] (xy) to node [below] {$\pi$} (x);
\end{tikzpicture}\end{wrapfigure}
%A facet $X$  in $\Gz[\cA]$ corresponds to a pair $(i,ac)$,
%where ${i \in \cF(\Gz)}$ represents input value assignments to all agents, 
%and $ac\in\cF(\cA)$ represents an action,
%codifying the communication exchanges that took place.
%The morphism~$\delta$ takes~$X$ to a facet $\delta(X) = (i,dec)$ of $\Gz[\cT]$,
%where $dec \in \cF(\cT)$ is the set of decision values that the agents will
%choose in the situation $X$.
%Moreover, $\pre(dec)$ holds in $i$, meaning that $dec$ corresponds to valid decision values for input $i$.
%The commutativity of the diagram 
%expresses the fact that both $X$ and $\delta(X)$ correspond to the same input
%assignment $i$.
%Now consider a single vertex $v \in X$ with $\chi(v) = a \in A$. Then, agent $a$
%decides its value solely according to its knowledge in $\Gz[\cA]$: if another
%facet $X'$ contains $v$, then $\delta(v) \in \delta(X) \cap \delta(X')$, meaning
%that $a$ has to decide the same value in both situations. 
%The model of computation, that produces all possible interleavings of a given number of operations per process, $N$, in a full-information mode (each time a process writes, it writes
%its local state), is represented by an action model $\cA$.
%\ml[margin]{$\mathcal{T}$ is a task and an action model?}
%The
%task is also represented using simplicial complex models, and an action model $\mathcal{T}$
%specifies
%valid decisions to be taken. 

Thus, to prove that a task is unsolvable, one needs to show that no such $\delta$ exists. 
But one would want to produce a specific formula, that concretely represents knowledge that exists
in $\cI [\cT ]$, but has not been acquired after running the protocol, namely  in $\cI [\cA ]$.
Indeed, it was shown  in~\cite{gandalf2018} that  two of the main impossibilities in distributed computability,
consensus~\cite{FischerLP85,LA87} and approximate agreement~\cite{HerlihyKR:2013}, can be expressed by 
such a formula.
However, for other unsolvable tasks (e.g. set agreement), no such formula has been found, despite the fact that
no morphism~$\delta$ exists. 
%And it is not known if in general such a formula should exist.

\subsubsection{Contributions.}
In this paper we show that actually, there are unsolvable tasks, for which no such formula
exists, namely,  the \emph{equality negation} task, 
defined by Lo and Hadzilacos~\cite{DBLP:journals/siamcomp/LoH00} and studied
by  the authors in \cite{DISC19}.
This task was introduced as the central idea to prove that the consensus hierarchy~\cite{Herlihy:1991:WS:114005.102808,Jayanti:1993:RHH:164051.164070} is not robust.
\begin{center}
 \includegraphics[width=0.9\textwidth]{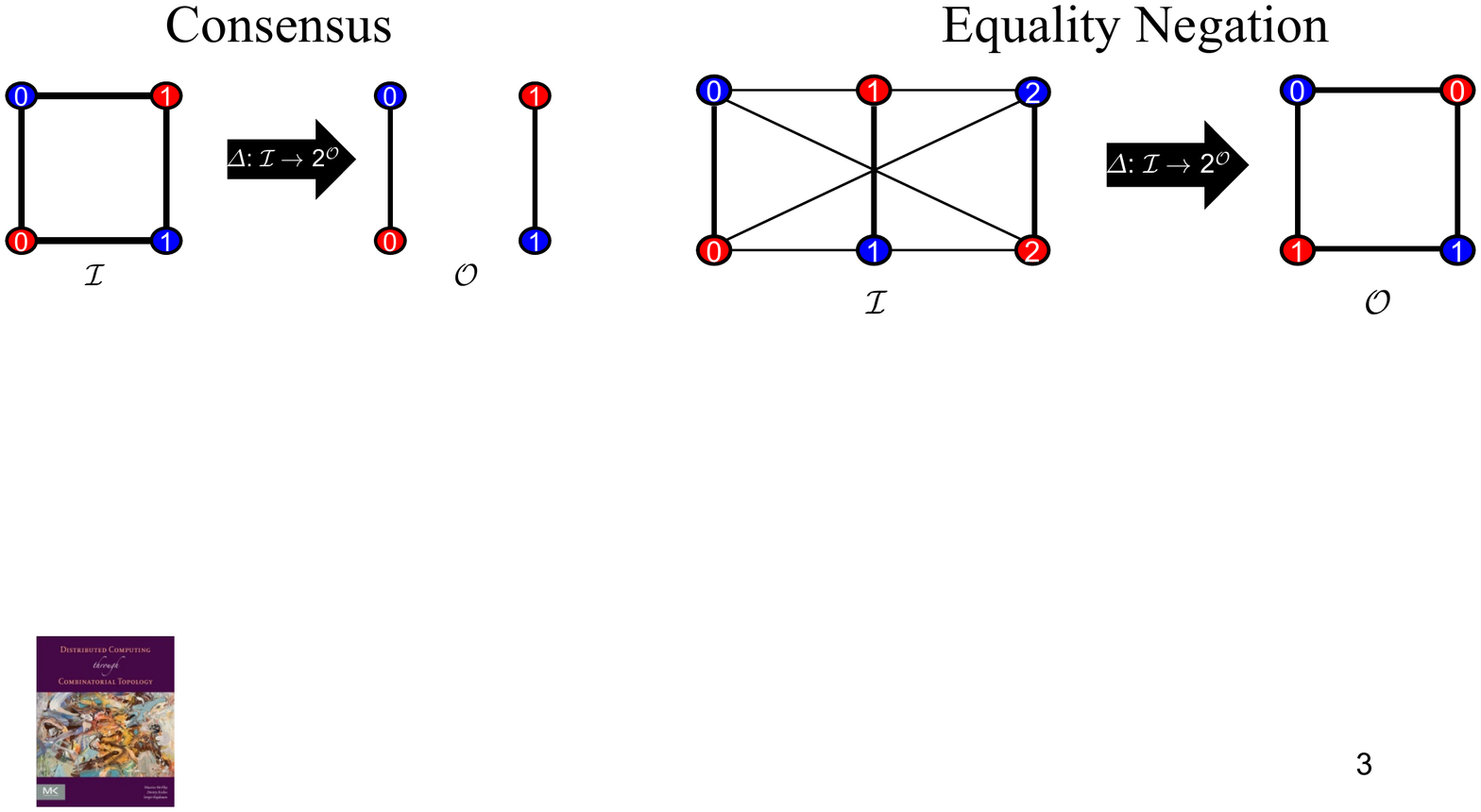}
 \vspace{-10pt}
\end{center}
Consider two processes $P_0$ and $P_1$, each of which has a private input value, drawn from the set 
of possible input values $\inputs=\{0,1,2\}$. 
%(instead of {0,1} in consensus). 
After communicating, each process must irrevocably decide a binary output value, either $0$ or $1$, so that the outputs of the processes are the same if and only if the input values of the processes are different.

It is interesting to study the solvability of the equality negation task from the epistemic point of view.
It is well known that there is no wait-free consensus algorithm in our model~\cite{Chor:1987:PCU:41840.41848,LA87}.
The same is true for equality negation, as shown in~\cite{DBLP:journals/siamcomp/LoH00,DISC19}.
This is intriguing because there is a formula that shows the impossibility
of consensus (essentially reaching common knowledge on input values)~\cite{gandalf2018}, 
while, as we show here, there is no such formula for equality negation.
In more detail, 
it is well known that consensus is intimately related to connectivity, and hence to common knowledge, while its specification requires
deciding unto  disconnected components of the output complex.
% Furthermore, a task is solvable iff this does not happen, and indeed, consensus is universal in the sense that it can be used to solve any non-wait-free solvable task.
The equality negation task is unsolvable for a different reason, since its output complex
is connected.
Moreover, equality negation is strictly weaker than consensus: consensus can implement equality negation,
but not viceversa (the latter is actually a difficult proof in~\cite{DBLP:journals/siamcomp/LoH00}).
So it is interesting to understand the difference between the knowledge required to solve each of these tasks.
%because an algorithm is provided in~\cite{DBLP:journals/siamcomp/LoH00} that solves consensus for two processes using an equality negation algorithm.

%Finally, the reason why equality negation is weaker than consensus has to do with the fact the
%the specification of equality negation is non-deterministic.
%If a blackbox for equality negation is required to return a value deterministically in solo executions, 
%then consensus and equality negation become reducible to each other (see the Appendix of~\cite{DBLP:journals/siamcomp/LoH00}), and the impossibility formulas are similar (about connectivity and common knowledge).
%sergio: the proof of this last claim, is by observing that a determinist specification would imply that at least one of the input edges of equality negation is mapped by Delta to two disconnected edges, and its solo vertices are mapped to different edges.

%

Our second contribution is to
propose an extended version of our DEL framework, for which there is such a formula.
Intuitively, the reason why we cannot find a formula witnessing the unsolvability of the task is because our logical language is too weak to express the knowledge required to solve the task.
So, our solution is to enrich the language by adding new atomic propositions, allowing us to express the required formula.
%For completeness, we describe in the Appendix a direct proof of the impossibility
%of equality negation, by proving that there is no morphism $\delta$ as required
%by the diagram above, to contrast between the two styles of impossibility proofs.

\subsubsection{Organization.}
Section~\ref{sec:prelim} recalls the DEL framework introduced in~\cite{gandalf2018}, and defines the layered message-passing model in this context.
In Section~\ref{sec:negationTask} we study the equality negation task
using DEL. First we explain why the impossibility proof does not work in
the standard setting, then we propose an extension allowing us to make the proof go through.
The Appendix includes proofs and a detailed treatment of the equality negation task
following the combinatorial topology approach, for completeness,
but also for comparison with the DEL approach.
%Section~\ref{sec:prelim} contains an overview of the
%topological models for DEL, a
%description of the  layered message-passing  model, and
%of the way we prove impossibility results.
%In Section~\ref{sec:negationTask} the equality negation task is described
%using DEL, and why the impossibility cannot be proved in
%the standard topological DEL setting. Finally, an extension of the setting
%is presented, where the impossibility proof goes through.
%The Appendix includes proofs and a detailed treatment of the equality negation task
%following the combinatorial topology approach, for completeness,
%but also for comparison with the DEL approach.

\section{Preliminaries}
\label{sec:prelim}

\subsection{Topological models for Dynamic Epistemic Logic (DEL)}
%\jl{Notations used later: agents $\Agents$, Atomic propositions $\Atom$, vertices $\vertices{M}$, facets $\facets{M}$, $\ell(X) := \bigcup_{x\in X}{\ell(x)}$, $\chi(X) = \bigcup \cdots$}

We recap here the new kind of model for epistemic logic based on 
chromatic simplicial complexes, introduced in~\cite{gandalf2018}. 
%\comment{% not room for in conference version to repeat this:
%We will show that these simplicial models are very closely 
%related to the usual Kripke models which are widely used to study the semantics 
%of epistemic logic formulas: there is an equivalence of categories between the
%two structures.
The geometric nature of simplicial complexes allows us to consider higher-dimensional topological
properties of our models, and investigate their meaning in terms of knowledge.
The idea of using simplicial complexes comes from distributed computability~\cite{HerlihyKR:2013,kozlov:2007}.
After describing simplicial models, we explain how to use them in DEL.%ynamic Epistemic Logic. 
%}
%The link with DEL and distributed computing will be developed in the next sections.

\paragraph{Syntax.}
Let $\Atom$ be a countable set of atomic propositions and $\Agents$ a finite set of agents.
The language $\mathcal{L}_K$ is generated by the following BNF
grammar:
\[
\varphi ::= p \mid \neg\varphi \mid (\varphi \land \varphi) \mid
K_a\varphi \qquad p \in \Atom,\ a \in \Agents
\]
In the following, we work with $n+1$ agents, and write $\Agents = \{ a_0, \ldots, a_n \}$.

\paragraph{Semantics.}
The usual semantics for multi-agent epistemic logic is based on Kripke frames.
Here, we use the recent notion of model based on simplicial complexes, which makes
explicit the topological information of Kripke frames. The precise relationship
between the usual Kripke models and our simplicial models is studied thoroughly in~\cite{gandalf2018}.

\begin{definition} [Simplicial complex \cite{kozlov:2007}]
A \emph{simplicial complex} $\langle V, M \rangle$ is given by a set $V$ of \emph{vertices} and a family $M$ of non-empty finite subsets of $V$ called \emph{simplices}, such that
for all $X \in M$, $Y \subseteq X$ implies $Y \in M$. We say that~$Y$ is a \emph{face} of $X$.
\end{definition}

Usually, the set of vertices is implicit and we simply refer to a simplicial complex as $M$. We write $\vertices{M}$ for the set of vertices of $M$. 
%Elements of $\vertices{M}$ (identified with singletons) are called \emph{vertices}.
A vertex $v \in \vertices{M}$ is identified with the singleton $\{v\} \in M$. 
Elements of $M$ are called \emph{simplices}, and those which are maximal w.r.t.\ inclusion are \emph{facets} (or \emph{worlds}),  the set of which  is denoted by $\facets{M}$. 
 The \emph{dimension} of a simplex $X \in M$ is  $|X|-1$. 
 A simplicial complex $M$ is \emph{pure} if all its facets are of the same dimension $n$. In this case,
 we say $M$ is of dimension $n$.
Given a finite set $\Agents$ of agents (that we will represent as colors), a \emph{chromatic simplicial complex} $\la M, \chi \ra$ consists of a simplicial complex $M$ and a coloring map $\chi : \vertices{M} \to \Agents$, such that for all $X \in M$, all the vertices of $X$ have distinct colors.

\begin{definition}[Simplicial map]
\label{def:morphism_of_complexes}
%\begin{definition} [Simplicial maps \cite{kozlov:2007}] 
Let $C$ and $D$ be two simplicial complexes. A \emph{simplicial map} 
$f: C \rightarrow D$ 
maps the vertices of $C$ to vertices of $D$, such that if~$X$ is a simplex of $C$, $f(X)$ %(the image set
%on the subset of vertices $X$) 
is a simplex of $D$.
A \emph{chromatic simplicial map} between two chromatic simplicial complexes is a simplicial map that preserves colors.
\end{definition}

%
%Let ${\cal S}_A$ be the category of pure chromatic simplicial complexes on $A$, 
%with chromatic simplicial maps for morphisms. 

For technical reasons, we restrict to models where all the atomic propositions
are saying something about some local value held by one particular agent. All the examples 
that we are interested in will fit in that framework. Let $\Values$ be some 
countable set of values, and $\Atom = \{ p_{a,x} \mid a \in \Agents, x \in \Values \}$ 
be the set of \emph{atomic propositions}. Intuitively, $p_{a,x}$ is true if agent 
$a$ holds the value $x$.
%In all our examples, an agent will hold exactly one value, but we do not enforce that.
We write $\Atom_a$ for the atomic propositions concerning agent $a$.

%A \emph{literal} over $\I{AP}$ is an element of $\Lit(\I{AP}) = \{p, \neg p \mid p \in \I{AP}\}$.
%A set $X$ of literals is \emph{consistent} if it does not contain both $p$ and $\neg p$ for any $p$; and $X$ is \emph{$\I{AP}$-maximal} if
%for all $p$, either $p \in X$ or $\neg p \in X$.

A \emph{simplicial model} $\cM = \la C, \chi, \ell \ra$ consists of a pure chromatic simplicial 
complex $\la C, \chi \ra$ of dimension $n$, and a labeling $\ell : \vertices{C} \to \mathscr{P}(\Atom)$
that associates with each vertex $v \in \vertices{C}$ a set of atomic propositions concerning agent $\chi(v)$, i.e., such that $\ell(v) \subseteq \Atom_{\chi(v)}$.
Given a facet $X = \{v_0, \ldots, v_n\} \in C$, we write $\ell(X) = \bigcup_{i=0}^n \ell(v_i)$.
A \emph{morphism} of simplicial models $f : \cM \to \cM'$ is a chromatic simplicial map that preserves the labeling: $\ell'(f(v)) = \ell(v)$ (and $\chi$).
%We denote by ${\cal SM}_{\Agents,\Atom}$ the category of simplicial models over the set of 
%agents $\Agents$ and atomic propositions $\Atom$.

\begin{definition}\label{semantFormulas}
We define the truth of a formula $\varphi$ in some epistemic state $(\cM,X)$ with $\cM=\la C, \chi, \ell \ra$ a simplicial model, $X \in \facets{C}$ a facet of $C$ and
$\varphi \in \Lcal_K(\Agents,\Atom)$.
The satisfaction relation, determining when a formula is true in an
epistemic state, is defined as:

\begin{tabular}{lcl}
$\cM,X \models p$ & \quad if \quad & $p \in \ell(X)$\\
$\cM,X \models \neg \varphi$ & \quad if \quad & $\cM,X \not\models \varphi$\\
$\cM,X \models \varphi \wedge \psi$ & \quad if \quad & $\cM,X \models \varphi
\mbox{ and } \cM,X \models \psi$\\
$\cM,X \models K_a \varphi$ & \quad if \quad & $\mbox{for all } Y \in \cF(C), a \in \chi(X \cap Y) \mbox{ implies } \cM,Y \models \varphi$\\
\end{tabular}
\end{definition}

It is not hard to see that this definition of truth agrees with the usual one on Kripke models (see~\cite{gandalf2018}). 

%Gain  of knowlege
% HOP

\subsubsection{DEL and its topological semantics.}

DEL is the study of modal logics of model change~\cite{sep-dynamic-epistemic,DEL:2007}.
A modal logic studied in DEL is obtained
by using action models~\cite{baltagMS:98}, which are relational structures that can be used to describe a variety of communication actions.

\paragraph{Syntax.}
We extend the syntax of epistemic logic with one more construction:
\[
\varphi ::= p \mid \neg\varphi \mid (\varphi \land \varphi) \mid
K_a\varphi \mid [\alpha]\varphi \qquad p \in \Atom,\ a \in \Agents
\]
Intuitively, $[\alpha]\varphi$ means that $\varphi$ is true after some 
\emph{action} $\alpha$ has occurred.
An action can be thought of as an 
announcement made by the environment,
which is not necessarily public, in the sense that
not all  agents receive these announcements.
The semantics of this new operator should be understood as follows:
 \[ \cM,X \models [\alpha] \varphi \quad \text{if} \quad \cM[\alpha],X[\alpha] \models \varphi\]
i.e., the formula $[\alpha] \varphi$ is true in some world $X$ of $\cM$ whenever $\phi$ is true in some new model $\cM[\alpha]$, where the knowledge of each agent has been modified according to the action $\alpha$.
To define formally what an action is, we first need to introduce the notion of \emph{action model}.
An action model describes all the possible actions that might happen, as well
as how they affect the different agents.
%We describe a dual version of action models, appropriate to represent epistemic 
%change in simplicial models. \jl{I think it is clearer for the DEL people if we use the usual action models instead of the simplicial action model. This is already how it's done in Section 3.1. I will rewrite this paragraph later.}

\paragraph{A simplicial complex version of DEL.}

An \emph{action model} is a structure $\cA = \la T,\sim,\pre \ra$,
where~$T$ is a domain of \emph{actions}, such that for
each $a \in \Agents$, $\sim_a$ is an equivalence relation on $T$, and
$\pre : T \to \Lcal_\cK$ is a  function that assigns a
\emph{precondition} formula $\pre(t)$ to each $t \in T$.
An action model is \emph{proper} if for any two different actions $t, t' \in T$, there is an agent $a \in \Agents$ who can distinguish between them, i.e., $t \not \sim_a t'$.

Given a simplicial model $\cM = \langle C, \chi, \ell \rangle$ and an action model $\cA = \la T, \sim, \pre \ra$, we define the \emph{product update simplicial model} $\cM[\cA] = \la C[\cA], \chi[\cA], \ell[\cA]  \ra$ as follows. 
Intuitively, the facets of $C[\cA]$ should correspond to pairs $(X,t)$ where $X \in C$ is a world of $\cM$ and $t \in T$ is an action of $\cA$, such that $\cM,X \models \pre(t)$.
Moreover, two such facets $(X,t)$ and $(Y,t')$ should be glued along their $a$-colored vertex whenever $a \in \chi(X \cap Y)$ and $t \sim_a t'$.
Formally, the vertices of $C[\cA]$ are pairs $(v,E)$ where $v \in \cV(C)$ is a vertex of $C$; $E$ is an equivalence class of  $\sim_{\chi(v)}$; and $v$ belongs to some facet $X \in C$ such that there exists $t \in E$ such that $\cM,X \models \pre(t)$.
Such a vertex keeps the color and labeling of its first component: $\chi[\cA](v,E) = \chi(v)$ and $\ell[\cA](v,E) = \ell(v)$.

Given a product update simplicial model $\cM[\cA] = \la C[\cA], \chi[\cA], \ell[\cA] \ra$ as above, 
	one can naturally enrich it	by extending the set of atomic propositions in order 
	to capture the equivalence class of~$\sim_{\chi(v)}$ on each vertex~$v$.
The extended set of atomic propositions would then be
	$\ExtAtom=\Atom\cup \{p_E \mid E\in T/\!\sim_a,\, a\in \Agents\}$, 
	where $T/\!\sim_a$ denotes the set of all equivalence classes of~$\sim_a$.
In that case, the \emph{extended product update model} is
	$\ExtProd{\M}{\A} = \langle C[\cA], \chi[\cA], \extended{\ell}[\cA]\rangle$,
	that differs from~$\cM$ only in labeling.
Namely, the enriched labeling $\extended\ell[\cA]$ %\colon \cV(C[\cA]) \to  \mathscr{P}(\ExtAtom)$
	maps each vertex $(v,E)\in C[\cA]$ into the set of atomic propositions 
	$\extended\ell[\cA]((v,E))= \ell(v)\cup \{p_E\}$.
	On this extended model $\ExtProd{\cM}{\cA}$, we can interpret formulas saying something not only about the atomic propositions of $\cM$, but also about the actions that may have occurred.

\medskip

In the next section, we describe a particular action model of interest,
the one corresponding to the layered message-passing model described in Section~\ref{sec:dcomp}.

\subsection{The layered message-passing action model}	
\label{sec:dcomp}

This section starts with an overview of the \emph{layered message-passing model} for two 
agents, or \emph{processes} as they are called in distributed computing.
More details about this model can be found in~\cite{HerlihyKR:2013}.
This model is known to be equivalent to the well-studied \emph{read/write wait-free model}, in the sense that it solves the same set of tasks.
When there are only two processes involved in the computation, which is what we want to study in this article, the layered message-passing model is easier to understand.
Here, we formalize this model as an action model; a more usual presentation can be found in Appendix~\ref{app:layered_model}, as well as a proof of equivalence between the two.

\subsubsection{The layered message-passing model.}
\label{sec:DCmodel}
Let the processes be $B,W$, to draw them in the pictures with colors black and white.
In the \emph{layered message-passing} model, computation is synchronous:
$B$ and $W$ take steps at the same time.
We will call each such step a \emph{layer}.
In each layer,
$B$ and $W$ both send a message to each other, 
where at most one message may fail to arrive,
implying that either one or two messages will be received.
This is a \emph{full information} model, in the sense that each time a process
sends a message, the message consists of its local state (i.e., all the information currently known to the process), and each time it receives a message,
it appends it to its own local state (remembers everything).
A protocol is defined by the number~$N$ of layers the processes execute. Then, each process should produce an output value based on its state at the end of the last layer. A decision function $\delta$ specifies the output value of each process at the end of the last layer. 

Given an initial state, an execution can be specified by a sequence of $N$ symbols over the alphabet $\set{\bot,B,W}$,
meaning that, if the $i$-th symbol in the sequence is $\bot$ then in the $i$-th layer both messages arrived,
and if the $i$-th symbol is $B$ (resp. $W$) then only $B$'s message failed to arrive (resp. $W$) in the $i$-th layer.
As an example, $\bot B W$ corresponds to an execution in which both processes have received each others message at layer one, then~$B$ received the message from $W$ but $W$ did not receive the message from~$B$ at layer two, and finally at layer
three, $W$ received the message from~$B$ but~$B$ did not receive the message from~$W$.

For example, there are three $1$-layer executions, namely $\bot$, $B$ and $W$, but from the point of view of process $B$, there are two distinguished cases: (i) either it did not receive a message, in which case it knows for sure that the execution that occurred was $W$, or (ii) it did receive a message from $W$, in which case the execution could have been either $B$ or $\bot$.
Thus, for the black process executions $B$ and $\bot$ are indistinguishable.

\subsubsection{The layered message-passing model as an action model.}
\label{sec:layered_model}

Consider the situation where the agents $\Agents =\set{B,W}$ each
start in an initial global state, defined by input values given to each agent.
The values are local, in the sense that each agent knows its own initial value,
but not necessarily the values given to other agents. 
The agents communicate to each other 
via the layered message-passing model described above.
%Then, based on the information each agent has after communication,
%the agent produces an output value. 
The layered message-passing action model described next is equivalent 
to the immediate snapshot action model of~\cite{gandalf2018} in the case of two processes. 

Let $\Vin$ be an arbitrary domain of \emph{input values}, and take the following set of atomic propositions $\Atom = \{ \inputprop{a}{x} \mid a \in \Agents,\, x \in \Vin \}$.
Consider a simplicial model $\cI = \la I, \chi, \ell \ra$ called the \emph{input simplicial model}.
%consisting  of a pure chromatic simplicial 
%complex $\la I, \chi \ra$ for $n+1$ agents $A$, and a labeling $\ell : \cV(I) \to \mathscr{P}(\I{AP})$
%that associates to each vertex $v \in \cV(I)$ a set of atomic propositions that
%represent the initial knowledge of agent $\chi(v)$.
Moreover, we assume that for each vertex $v \in \cV(I)$, corresponding to some agent $a = \chi(v)$, the labeling $\ell(v) \subseteq \Atom_{a}$ is a singleton,
assigning to the agent $a$ its private input value.
A facet $X \in \cF(I)$ represents
a possible initial configuration, where each agent has been given an input value.

The  {action model}  $\IS= \la T,\sim,\pre \ra$ corresponding to $N$ layers is defined as follows.
Let $L_N$ be the set of all sequences of $N$ symbols over the alphabet $\set{\bot,B,W}$.
Then, we take~$T=L_N\times \cF(I)$. An action $(\alpha,X)$, where $\alpha\in L_N$ and $X \in \cF(I)$ 
represents a possible execution starting in the  initial configuration~$X$.
We write $X_a$ for the input value assigned to agent $a$ in the input simplex $X$.
Then, $\pre : T \to \Lcal_\cK$  assigns to each $(\alpha,X) \in T$ a 
{precondition} formula $\pre(\alpha,X)$ which holds exactly in $X$ 
(formally, we take $\pre(\alpha,X) = \bigwedge_{a \in \Agents} \inputprop{a}{X_a}$).
To define the indistinguishability relation $\sim_a$, we proceed by induction on $N$.
For $N=0$, we define $(\varnothing, X) \sim_a (\varnothing, Y)$ when $X_a = Y_a$, since process $a$ only sees its own local state.
Now assume that the indistinguishability relation of $\IS$ has been defined, we define $\sim_a$ on $\mathcal{MP}_{N+1}$ as follows.
Let  $\alpha,\beta\in L_N$ and $p,q\in\set{\bot,B,W}$. % be executions of length $N+1$.
We define $(\alpha \cdot p, X) \sim_B (\beta \cdot q, Y)$ if either:
\begin{enumerate}[label={$(\roman*)$}]
\item $p = q = W$ and $(\alpha, X) \sim_B (\beta, Y)$, or
\item $p,q \in \{\bot, B\}$ and $X=Y$ and $\alpha = \beta$, % and $(\alpha, X) \sim_B (\beta, Y)$.
\end{enumerate}
and similarly for $\sim_W$, with the role of $B$ and $W$ reversed.
Intuitively, either (i) no message was received, and the uncertainty from the previous layers remain; or (ii) a message was received, and the process $B$ can see the whole history, except that it does not know whether the last layer was $B$ or $\bot$.

%And for
%each $a \in \Agents$, $\sim_a$ is the equivalence relation on $T$ determined by when
%process $a$ has the same local state in two executions.
%There are two cases. First, all actions starting in the same initial state $X$, 
%can be  represented by a path
%%of length $3^N$, if we work with the dual, a simplical complex.
%as explained above. 
%Second, for two facets $X,X'$ sharing a vertex corresponding to process $P$, there are two schedules $\alpha,\alpha'$, where all messages from the other process are lost, and then $P$ has the same
%local state at the end of $(\alpha,X)$ and $(\alpha',X)'$.
%Thus, notice that this action model is proper.

%\begin{center}
%\begin{tikzpicture}[auto, scale=2, whitevertex/.style={draw=black,thick,circle,fill=white,inner sep=2pt,minimum size=1em}, blackvertex/.style={draw=black,thick,circle,fill=black,inner sep=2pt,minimum size=1em,text=white}]
%
%\node[blackvertex] (b0) at (0,1) {$B\bot$};
%\node[whitevertex] (w1) at (1,1) {$BW$};
%\node[blackvertex] (b2) at (2,1) {$BW$};
%\node[whitevertex] (w3) at (3,1) {$\bot W$};
%\path	  (b0) edge  node[above]  {$W$}   (w1)
%          (w1) edge node[above]  {$\bot$} (b2)
%          (b2) edge node[above]  {$B$} (w3);
%\end{tikzpicture}
%\end{center}

To see what the effect of this action model is, let us start with an input model~$\cI$ with only one input configuration $X$ (input values have been omitted).
\begin{center}
\begin{tikzpicture}[auto, scale=2, whitevertex/.style={draw=black,thick,circle,fill=white,inner sep=2pt,minimum size=1em}, blackvertex/.style={draw=black,thick,circle,fill=black,inner sep=2pt,minimum size=1em,text=white}]
\node[blackvertex] (b0) at (0,1) {};
\node[whitevertex] (w1) at (1,1) {};
\path	  (b0) edge  node[above]  {}   (w1);
\end{tikzpicture}
\end{center}
After one layer of the message passing model, we get the following model $\cI[\mathcal{MP}_1]$:
\begin{center}
\begin{tikzpicture}[auto, scale=2, whitevertex/.style={draw=black,thick,circle,fill=white,inner sep=2pt,minimum size=1em}, blackvertex/.style={draw=black,thick,circle,fill=black,inner sep=2pt,minimum size=1em,text=white}]
\node[blackvertex] (b0) at (0,1) {};
\node[whitevertex] (w1) at (1,1) {};
\node[blackvertex] (b2) at (2,1) {};
\node[whitevertex] (w3) at (3,1) {};
\path	  (b0) edge  node[above]  {$W$}   (w1)
          (w1) edge node[above]  {$\bot$} (b2)
          (b2) edge node[above]  {$B$} (w3);
\end{tikzpicture}
\end{center}
After a second layer, we get $\cI[\mathcal{MP}_2]$:
\begin{center}
\begin{tikzpicture}[auto, scale=1.3, whitevertex/.style={draw=black,thick,circle,fill=white,inner sep=2pt,minimum size=1em}, blackvertex/.style={draw=black,thick,circle,fill=black,inner sep=2pt,minimum size=1em,text=white}]

\node[blackvertex] (b0) at (0,1) {};
\node[whitevertex] (w1) at (1,1) {};
\node[blackvertex] (b2) at (2,1) {};
\node[whitevertex] (w3) at (3,1) {};

\node[blackvertex] (b4) at (4,1) {};
\node[whitevertex] (w5) at (5,1) {};
\node[blackvertex] (b6) at (6,1) {};
%\node[whitevertex] (w3) at (7,1) {$\bot W$};

%\node[blackvertex] (b0) at (0,1) {$B\bot$};
\node[whitevertex] (w7) at (7,1) {};
\node[blackvertex] (b8) at (8,1) {};
\node[whitevertex] (w9) at (9,1) {};

\path	  (b0) edge  node[above]  {$WW$}   (w1) (w1) edge node[above]  {$W\bot$} (b2) (b2) edge node[above]  {$WB$} (w3) (w3)
edge node[above]  {$\bot B$}  (b4) (b4)  edge node[above]  {$\bot \bot$} (w5) (w5) edge node[above]  {$\bot W$} (b6) (b6) edge node[above]  {$B W$} (w7) (w7)  edge node[above]  {$B \bot$} (b8) (b8) edge node[above]  {$B B$} (w9);

\end{tikzpicture}
\end{center}

The remarkable property of this action model, is that it preserves the topology of the input model.
This is a well-known fact in distributed computing~\cite{HerlihyKR:2013}, reformulated here in terms of DEL.

\begin{theorem}
\label{th:mainTopInvIS}
Let  $\cI = \la I, \chi, \ell \ra$ be an {input model}, and 
$\IS= \la T,\sim,\pre \ra$ be the $N$-layer action model.
Then, the product update 
simplicial model $\cI [\IS]$ % = \la I[\IS], \chi[\IS], \ell[\IS] \ra$
is a subdivision of $\cI$, where each edge is subdivided into $3^N$ edges.
\end{theorem}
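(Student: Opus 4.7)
The plan is to prove the theorem by induction on $N$, using the natural projection $\pi \colon \cI[\IS_N] \to \cI$ defined on vertices by $\pi(v, E) = v$. The first step is to verify that $\pi$ is a well-defined chromatic simplicial map, which reduces to the lemma that whenever $(\alpha, X) \sim_a (\beta, Y)$ we have $X_a = Y_a$, so the $a$-colored vertices of $X$ and $Y$ genuinely coincide in $\cI$. This lemma follows by induction on $N$: case (ii) of the definition of $\sim_a$ forces $X = Y$ directly, and case (i) applies the inductive hypothesis.

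Fix an edge $X = \{v_B, v_W\}$ of $\cI$ and let $F_N(X) \subseteq \cI[\IS_N]$ denote the subcomplex whose facets are the $(\alpha, X)$ for $\alpha \in L_N$. I show by induction on $N$ that $F_N(X)$ is a path of $3^N$ edges whose two free endpoints lift $v_B$ and $v_W$. The base case is immediate. For the inductive step, I examine the three extensions $(\alpha \cdot p, X)$ of each edge $(\alpha, X) \in F_N(X)$. By case (ii) of $\sim_B$, the extensions with $p \in \{\bot, B\}$ share a $B$-vertex; symmetrically by case (ii) of $\sim_W$, those with $p \in \{\bot, W\}$ share a $W$-vertex. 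Hence the three extensions form a sub-path with $(\alpha \cdot \bot, X)$ in the middle, the $B$-vertex of $(\alpha \cdot W, X)$ exposed on one end, and the $W$-vertex of $(\alpha \cdot B, X)$ exposed on the other. Case (i) of $\sim_B$ (respectively $\sim_W$) then guarantees that whenever two edges $(\alpha, X)$, $(\beta, X)$ share a $B$-vertex (resp.\ $W$-vertex) in $F_N(X)$, their sub-paths glue along their $(\cdot \cdot W, X)$ endpoints (resp.\ $(\cdot \cdot B, X)$ endpoints) in $F_{N+1}(X)$. This yields a path of $3^{N+1}$ edges.

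To finish, I verify that the two global endpoints project to $v_B$ and $v_W$. Unrolling case (i) of $\sim_B$ down to the base case, the $B$-vertex of $(W^{N+1}, X)$ has equivalence class depending only on $X_B$, hence it is a lift of $v_B$; symmetrically for $(B^{N+1}, X)$ and $v_W$. The same lemma about $\sim_a$ and input values shows that fibers $F_N(X)$ and $F_N(X')$ of two input edges sharing a vertex in $\cI$ glue at their common lifted endpoint, so $\cI[\IS_N]$ is globally the claimed subdivision.

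The main obstacle I anticipate is the careful bookkeeping of orientations within the recursive subdivision, since each sub-path inherits its traversal direction from the parent edge (reading either $W, \bot, B$ or $B, \bot, W$ from left to right in the global path). A useful invariant to maintain alongside the inductive hypothesis is that consecutive edges of $F_N(X)$ share vertices of alternating color, which is preserved when each edge is tripled. Once this invariant is in place, every remaining step reduces to a direct appeal to cases (i) and (ii) of the definition of $\sim_a$.
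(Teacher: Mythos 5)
Your argument is correct, but it follows a genuinely different route from the paper's. The paper does not reason directly on the equivalence classes of $\sim_a$: it first sets up the operational protocol graph $\cP_N$ via the $\view$ functions (Appendix, Theorem~\ref{th:mainSubdiv}), takes the subdivision structure of $\cP_N$ as the classical distributed-computing fact (justified there by the layer-by-layer picture of tripling each edge), and then proves Theorem~\ref{th:mainTopInvIS} by exhibiting an isomorphism $\cP_N \cong \cI[\IS]$ through the equivalence $\view_B(\alpha,\sigma)=\view_B(\beta,\tau) \iff (\alpha,\sigma)\sim_B(\beta,\tau)$ (Lemmas~\ref{lem:equal_views} and~\ref{lem:model_isomorphic}). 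You instead work purely syntactically inside the product update: your induction on the fibers $F_N(X)$, using cases (i) and (ii) of $\sim_B$ and $\sim_W$ to show each fiber is a path of $3^N$ edges with the lifted input vertices at its ends, re-proves by hand the subdivision fact the paper imports, and so is more self-contained and avoids introducing views at all; what it gives up is the explicit operational reading of vertices as process views, which the paper reuses elsewhere (e.g.\ in the appendix impossibility argument). One spot to make explicit: besides showing that the endpoint class of $(W^N,X)$ depends only on $X_B$, you should state that this is the \emph{only} cross-fiber identification, i.e.\ that $(\alpha,X)\sim_B(\beta,Y)$ with $X\neq Y$ forces $\alpha=\beta=W^N$ (and dually for $\W$); this follows from the same unrolling of case (i) you already invoke, since case (ii) requires $X=Y$, but as written your appeal to the lemma ``$\sim_a$ implies equal $a$-inputs'' only rules out gluings between fibers over edges not sharing an $a$-vertex, not extra gluings between fibers that do share one.
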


\subsection{Outline of impossibility proofs}
\label{sec:proofmethod}

%\sr{Notation should be consistent with previous section? Use $\IS$?}
We now describe how the set up of~\cite{gandalf2018} is used to prove impossibility results in distributed computing.
It is closely related to the usual topological approach to distributed computability~\cite{HerlihyKR:2013}, except that the input complex, output complex and protocol complex are now viewed as simplicial models for epistemic logic.
By interpreting epistemic logic formulas on those structures, we can understand the epistemic content of the abstract topological arguments for unsolvability.
For example, when the usual topological proof would claim that consensus is not solvable because the protocol complex is connected, our DEL framework allows us to say that the reason for impossibility is that the processes did not reach common knowledge of the set of input values. This particular example, among others, is treated in depth in~\cite{gandalf2018}.

%A task  specifies
%the output values that the agents may decide, when starting
%in a given input state.
As in the previous section, we fix an input simplicial model $\cI = \la I, \chi, \ell \ra$.
A \emph{task} for $\cI$ is an action model
   ${\mathcal{T}}=\la T, \sim, \pre\ra$
for agents $\Agents$, where each action $t \in T$ consists of a function $t : \Agents \to \Vout$,
 where $\Vout$ is an arbitrary
 domain of \emph{output values}.
 Such an action is interpreted as an assignment of an output value for each agent.
 Each such $t$ has 
a precondition that is true in one or more facets of $\cI$,
interpreted as ``if the input configuration is a facet in which $\pre(t)$ holds,
and every agent $a \in \Agents$ decides the value $t(a)$, then this is a valid execution''.
\longversion{
  The indistinguishability relation is defined as $t \sim_a t'$ when $t(a) = t'(a)$.
}

\begin{definition}
\label{def:solving_task}
%The protocol $\cP$ \emph{solves} the % [protocol not defined].
Task $\cT$ is solvable in $\cI[\IS]$ if there exists a morphism $\delta: \cI[\IS]\rightarrow \cI[{\cT}]$ such that
$\pi\, \circ\, \delta=\pi$, i.e., the diagram of simplicial complexes below commutes.
\end{definition}

\begin{wrapfigure}[8]{r}{0.2\textwidth}
\begin{center}
\vspace{-35pt}
\begin{tikzpicture}
  \node (s) {$\I[\IS]$};
  \node (xy) [below=2 of s] {$\I[\T]$}; %\Delta \subseteq \cI \times \cO$};
  \node (x) [left=of xy] {$\I$};
  \draw[<-] (x) to node [sloped, above] {$\pi$} (s);
  \draw[->, dashed, right] (s) to node {$\delta$} (xy);
  \draw[->] (xy) to node [below] {$\pi$} (x);
\end{tikzpicture}
\end{center}
\end{wrapfigure}
In the above definition, the two maps denoted as $\pi : \cI[\IS] \to \cI$ and $\pi : \cI[\cT] \to \cI$ are simply projections on the first component.
The intuition behind this definition is the following.
A facet $X$  in $\cI[\IS]$ corresponds to a pair $(i,act)$,
where ${i \in \cF(\cI)}$ represents input value assignments to all agents, 
and $act\in \IS$ represents an action,
codifying the communication exchanges that took place.
The morphism~$\delta$ takes~$X$ to a facet $\delta(X) = (i,t)$ of $\cI[\cT]$,
where $t \in \cT$ is assignment of decision values that the agents will
choose in the situation $X$.

Moreover, $\pre(t)$ holds in $i$, meaning that $t$ corresponds to valid decision values for input $i$.
The commutativity of the diagram expresses the fact that both $X$ and $\delta(X)$ correspond to the same input assignment $i$.
Now, consider a single vertex $v \in X$ with $\chi(v) = a \in \Agents$. Then, agent $a$
decides its value solely according to its knowledge in $\cI[\IS]$: if another
facet $X'$ contains $v$, then $\delta(v) \in \delta(X) \cap \delta(X')$, meaning
that $a$ has to decide the same value in both situations.

To prove impossibility results, our goal is thus to show that no such map $\delta$ can exist. To do so, we rely on the following lemma, which is a reformulation in the simplicial setting of a classic result of modal logics.

\begin{lemma} [\cite{gandalf2018}]
\label{lem:gain_knowledge}
Consider simplicial models $\cM=\la C, \chi, \ell \ra$ and
$\cM' = \la C', \chi', \ell' \ra$, and a  morphism $f : \cM \to \cM'$.
Let $X \in \cF(C)$ be a facet of $\cM$, $a$ an agent, and $\varphi$ a formula 
which does not contain negations except, possibly, in front of atomic propositions.
Then, $\cM',f(X) \models \varphi$ implies $\cM,X \models \varphi$.
\end{lemma}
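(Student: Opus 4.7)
The plan is to proceed by structural induction on $\varphi$. Since negations are restricted to appearing only in front of atomic propositions, every such formula is built from literals ($p$ or $\neg p$) using $\wedge$ and knowledge operators $K_b$ for $b \in \Agents$, so there are only four cases to handle.

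For the atomic case, a morphism of simplicial models preserves both colors and labels: $\chi'(f(v)) = \chi(v)$ and $\ell'(f(v)) = \ell(v)$ for every vertex $v \in \cV(C)$. Extending this vertexwise equality to facets yields $\ell'(f(X)) = \ell(X)$, and hence $\cM', f(X) \models p$ iff $\cM, X \models p$, and likewise for $\neg p$. (In fact these are biconditionals for literals, so the lemma holds in both directions at the base step.) The conjunctive case follows immediately from the induction hypothesis applied to each conjunct.

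The main case is $\varphi = K_b \psi$. Assume $\cM', f(X) \models K_b \psi$, and let $Y \in \cF(C)$ be any facet with $b \in \chi(X \cap Y)$; we must show $\cM, Y \models \psi$. Let $v$ be the unique $b$-colored vertex in $X \cap Y$. Since $f$ is a chromatic simplicial map, $f(Y)$ is a simplex of $C'$ whose vertices carry the same colors as those of $Y$; because both models are pure of the same dimension $|\Agents|-1$ and distinct colors force distinct image vertices, $f(Y)$ is in fact a facet of $C'$. Moreover $f(v) \in f(X) \cap f(Y)$ and $\chi'(f(v)) = b$, so the semantic clause for $K_b$ in $\cM'$ yields $\cM', f(Y) \models \psi$. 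The induction hypothesis, applied to the strictly smaller formula $\psi$ and the facet $Y$, then delivers $\cM, Y \models \psi$, as required.

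The only real subtlety is ensuring, in the $K_b$ step, that $f(Y)$ is a \emph{facet} of $C'$ rather than merely a simplex, since the semantic clause for $K_b$ quantifies only over facets. This is exactly where we use both the purity of the simplicial models and the color-preserving nature of the morphism, which together force the image of a top-dimensional chromatic simplex to be top-dimensional. Everywhere else, the induction is entirely routine, which is why the restriction on the polarity of negations (ruling out a $\neg K_b \psi$ subformula that would require reasoning in the converse direction) is essential.
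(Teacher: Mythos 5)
Your proof is correct and follows essentially the standard argument: the paper itself only cites this lemma from~\cite{gandalf2018}, where it is proved by the same structural induction on positive formulas, with the base case resting on label preservation by morphisms and the $K_b$ case on the fact that a chromatic simplicial map between pure models of the same dimension sends facets to facets while preserving colors. Your handling of the one genuine subtlety (that $f(Y)$ is a facet, not merely a simplex, so the semantic clause for $K_b$ in $\cM'$ applies) is exactly right.
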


To prove that a task $\cT$ is not solvable in $\IS$, our usual proof method goes like this. Assume $\delta \colon \cI[\IS] \to \cI[\cT]$ exists, then:
\begin{enumerate}
\item Pick a well-chosen positive epistemic logic formula $\phi$,
\item Show that $\phi$ is true in every world of $\cI[\cT]$,
\item Show that there exists a world $X$ of $\cI[\IS]$ where $\phi$ is false,
\item By Lemma~\ref{lem:gain_knowledge}, since $\phi$ is true in $\delta(X)$ then it must also be true in~$X$, which is a contradiction with the previous point.
\end{enumerate}
This kind of proof is interesting because it explains the reason why the task is not solvable. The formula $\phi$ represents some amount of knowledge which the processes must acquire in order to solve the task. If~$\phi$ is given, the difficult part of the proof is usually the third point: finding a world $X$ in the protocol complex where the processes did not manage to obtain the required amount of knowledge.
The existence of this world can be proved using theorems of combinatorial topology, such as Sperner's Lemma; see~\cite{gandalf2018} for such examples.

\section{Equality negation task for two processes}
\label{sec:negationTask}

The equality negation task has been introduced 
		in~\cite{DBLP:journals/siamcomp/LoH00}, and further studied in~\cite{DISC19}.
In this section, we will be interested only in the case of two processes.
Each process starts with an input value in the set $\{0, 1, 2\}$, and has
to irrevocably decide on a value~$0$ or~$1$, such that the decisions of the two
processes are the same if and only if their input values are different.
In~\cite{DBLP:journals/siamcomp/LoH00} it has been proved that the 
equality negation task is unsolvable for two processes in a wait-free model
using only registers. We reproduce this proof and give a more direct proof
in Appendix~\ref{sec:app:negation}.
In this section we  analyze this task using our DEL framework and use it
to prove the unsolvability of the equality negation task.

\subsection{DEL analysis of the task}
\label{sec:DEL_analysis}

Let $\Agents = \{\B,\W\}$ be the two agents (or processes).
%SR: we already said this
%In the following, we usually call them \emph{processes} instead of agents.
In the pictures, process $\B$ will be associated to {black} vertices,
and process $\W$ with  {white} vertices.
The atomic propositions are of the form $\inputprop{p}{i}$, for $p \in \Agents$ and $i \in \{0, 1, 2\}$, meaning that process $p$ has input value $i$.
The input model is $\I = \langle I, \chi, \ell \rangle$ where:
\begin{itemize}
\item $I$ is the simplicial complex whose set of vertices is $\vertices{I} = \Agents \times \{0, 1, 2\}$, and whose facets are of the form $\{(B,i), (W,j)\}$ for all $i, j$.
\item The coloring $\chi : \vertices{I} \to \Agents$ is the first projection $\chi(p,i) = p$.
\item $\ell(p,i) = \{\inputprop{p}{i}\}$.
\end{itemize}

The input model $\I$ is represented below. In the picture, a vertex
${(p,i) \in \vertices{I}}$ is represented as a vertex of color $p$ with
value $i$.

\begin{center}
\begin{tikzpicture}[auto, scale=1.5, whitevertex/.style={draw=black,thick,circle,fill=white,inner sep=2pt,minimum size=1em}, blackvertex/.style={draw=black,thick,circle,fill=black,inner sep=2pt,minimum size=1em,text=white}]

\node[whitevertex] (w0) at (0,0) {$0$};
\node[whitevertex] (w1) at (1,1) {$1$};
\node[whitevertex] (w2) at (2,0) {$2$};
\node[blackvertex] (b0) at (0,1) {$0$};
\node[blackvertex] (b1) at (1,0) {$1$};
\node[blackvertex] (b2) at (2,1) {$2$};
\path (w0) edge (b0) edge (b1) edge (b2)
	  (w1) edge (b0) edge (b1) edge (b2)
      (w2) edge (b0) edge (b1) edge (b2);
\end{tikzpicture}
\end{center}

We now define the action model $\T = \langle T, \sim, \pre \rangle$ that
specifies the task.
Since the only possible outputs are~$0$ and~$1$, there are four possible
actions: $T = \{0,1\}^2$, where by convention the first component is the
decision of $\B$, and the second component is the decision of $\W$.
Thus, two actions $(d_\B,d_\W) \sim_\B (d_\B',d_\W')$ in~$T$ are
indistinguishable by $\B$ when  $d_\B=d_\B'$, and similarly for $\W$.
Finally, the precondition $\pre(d_\B,d_\W)$ specifies the task as expected:
if $d_\B=d_\W$ then $\pre(d_\B,d_\W)$ is true exactly in the simplices of $\I$ 
which have different input values, and otherwise in all the simplices which
have identical inputs.

The \emph{output model} is obtained as the product update model
$\O = \I[\T] = \langle O, \chi_\O, \ell_\O \rangle$.
By definition, the vertices of $O$ are of the form $(p,i,E)$, where
$(p,i) \in \vertices{I}$ is a vertex of $\I$, and $E$ is an equivalence class
of $\sim_p$.
But note that~$\sim_p$ has only two equivalence classes, depending on the
decision value ($0$ or~$1$) of process~$p$. So a vertex of $O$ can be written
as $(p,i,d)$, meaning intuitively that process $p$ started with input~$i$ and decided value~$d$.
The facets of $O$ are of the form $\{(\B,i,d_\B), (\W,j,d_\W)\}$ where either
$i=j$ and $d_\B\neq d_\W$, or $i\neq j$ and $d_\B=d_\W$.
The coloring $\chi_\O$ and labeling $\ell_\O$ behave the same as in $\I$.

The output model for the equality negation task is depicted below.
Decision values do not appear explicitly on the picture, but notice how the
vertices are arranged as a rectangular cuboid: the vertices on the front face
have decision value~$0$, and those on the rear face decide~$1$.

\begin{center}
\begin{tikzpicture}[auto, scale=1.8, whitevertex/.style={draw=black,thick,circle,fill=white,inner sep=2pt,minimum size=1em}, blackvertex/.style={draw=black,thick,circle,fill=black,inner sep=2pt,minimum size=1em,text=white}]

\node[whitevertex] (w00) at (0,0) {$0$};
\node[whitevertex] (w10) at (1,1) {$1$};
\node[whitevertex] (w20) at (2,0) {$2$};
\node[blackvertex] (b20) at (0,1) {$2$};
\node[blackvertex] (b10) at (1,0) {$1$};
\node[blackvertex] (b00) at (2,1) {$0$};
\node[whitevertex] (w01) at (2.4,1.4) {$0$};
\node[whitevertex] (w11) at (1.4,0.4) {$1$};
\node[whitevertex] (w21) at (0.4,1.4) {$2$};
\node[blackvertex] (b01) at (0.4,0.4) {$0$};
\node[blackvertex] (b11) at (1.4,1.4) {$1$};
\node[blackvertex] (b21) at (2.4,0.4) {$2$};
\path (w00) edge (b01) edge (b10) edge (b20)
	  (w10) edge (b00) edge (b11) edge (b20)
      (w20) edge (b00) edge (b10) edge (b21)
      (w01) edge (b00) edge (b11) edge (b21)
	  (w11) edge (b01) edge (b10) edge (b21)
      (w21) edge (b01) edge (b11) edge (b20);
\end{tikzpicture}
\end{center}

We want to prove that this task is not solvable when the processes communicate
through $N$ layers of our message passing model, no matter how large $N$ is selected. %, for any $N$.
%To make things simpler,
%we use the \emph{immediate snapshot} action model, which is known to be
%equivalent to read-write registers in terms of computational power.
%\jl[margin]{[Citation needed]}
Thus, what we need to show is that there is no morphism $\delta : \I[\IS] \to \O$ that makes the diagram of Definition~\ref{def:solving_task} commute. 
%The proof is not hard, using the fact that $ \I[\IS]$ is a subdivision of $I$.
%%Without loss of generality, assume that the subsivision...
%\begin{theorem}
%There is no morphism $\delta : \I[\IS] \to \O$ that makes the diagram of Definition~\ref{def:solving_task} commute.
%\end{theorem}
%
%However, 
In Section~\ref{sec:bisimulation_and_limits}, we will show that the general proof method described in Section~\ref{sec:proofmethod} actually fails. In Section~\ref{sec:extendedDEL}, we extend the expressivity of our logic in order to obtain an epistemic proof that $\delta$ does not exist. 

%\begin{center}
%\begin{tikzpicture}
%  \node (s) {$\I[\IS]$};
%  \node (xy) [below=2 of s] {$\O$}; %\Delta \subseteq \cI \times \cO$};
%  \node (x) [left=of xy] {$\I$};
%  \draw[<-] (x) to node [sloped, above] {$\pi$} (s);
%  \draw[->, dashed, right] (s) to node {$\delta$} (xy);
%  \draw[->] (xy) to node [below] {$\pi$} (x);
%\end{tikzpicture}
%\end{center}

\subsection{Bisimulation and limits of the DEL framework}
\label{sec:bisimulation_and_limits}

We would like to use the proof method described in
Section~\ref{sec:proofmethod} to find a logical obstruction showing that the
morphism $\delta$ cannot exist, through a formula $\phi$.
%[repeated:]
%Thus, we need to find a well-chosen formula $\phi$, such that $\phi$ is false
%in some world $X$ of $\I[\IS]$, but $\phi$ is true in the world $\delta(X)$ of
%$\O$. % This would be a contradition by Lemma~\ref{lem:gain_knowledge}.
%
 To show that, in fact, there is no such suitable formula
$\phi$, we first need to define bisimulations for simplicial models.

\begin{definition}[Bisimulation]\label{def:bisimulation}
Let $\M = \langle M, \chi, \ell \rangle$ and $\M' = \langle M', \chi', \ell'
\rangle$ be two simplicial models. 
Relation ${R \subseteq \facets{M} \times \facets{M'}}$ is a 
	\emph{bisimulation} between $\M$ and $\M'$ if the following conditions hold:
\begin{enumerate}[label=(\roman*)]
\item \label{atoms}
  %For all $X \in \facets{M}$ and $X' \in \facets{M'}$, if
  If $X \mathrel{R} X'$ then $\ell(X) = \ell'(X')$.
\item \label{forth}
  For all %$X,Y \in \facets{M}$, $X' \in \facets{M'}$ and
  $a \in \Agents$,
  if $X \mathrel{R} X'$ and ${a \in \chi(X \cap Y)}$,
  then there exists $Y' \in \facets{M'}$ such that $Y \mathrel{R} Y'$ and
  $a \in \chi'(X' \cap Y')$.
\item \label{back}
  For all %$X',Y' \in \facets{M'}$, $X \in \facets{M}$ and
  $a \in \Agents$,
  if $X \mathrel{R} X'$ and ${a \in \chi'(X' \cap Y')}$,
  then there exists ${Y \in \facets{M}}$ such that $Y \mathrel{R} Y'$ and
  $a \in \chi(X \cap Y)$.
\end{enumerate}
% When there exists a bisimulation between $\M$ and $\M'$, we say that $\M$ and $\M'$ are \emph{bisimilar}.
When $R$ is a bisimulation and $X \mathrel{R} X'$, we say that $X$ and $X'$ are \emph{bisimilar}.
\end{definition}

The next lemma states that two bisimilar worlds satisfy exactly the
same formulae. This is a well-known fact in the context of Kripke models. %\jl[margin]{[Citation needed]}
%Since we simply translated the usual definition of bisimulation in the language
%of simplicial models, the same result also holds.
The same results holds for bisimulations between simplicial models.

\begin{lemma}
\label{lem:bisimulation_truth}
Let $R$ be a bisimulation between $\M$ and $\M'$.
Then for all facets $X, X'$ such that $X \mathrel{R} X'$, and for every
epistemic logic formula $\phi$,
  \[\M,X \models \phi \qquad \text{iff} \qquad \M',X' \models \phi\]
\end{lemma}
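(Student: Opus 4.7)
The plan is to prove this lemma by structural induction on the formula $\phi$, which is the standard approach for bisimulation invariance results in modal logic. The key observation is that each of the three conditions (i)--(iii) in Definition~\ref{def:bisimulation} precisely matches one of the cases in the inductive argument: condition (i) handles atomic propositions, while conditions (ii) (\emph{forth}) and (iii) (\emph{back}) together handle the knowledge modality $K_a$. The Boolean connectives (negation and conjunction) require no bisimulation hypothesis and go through by the induction hypothesis alone.

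First I would fix an arbitrary bisimulation $R$ between $\M$ and $\M'$ and proceed by induction on the structure of $\phi$. For the \textbf{atomic case} $\phi = p$, we have $\M, X \models p$ iff $p \in \ell(X)$, which by condition~\ref{atoms} equals $\ell'(X')$, so iff $\M', X' \models p$. For the \textbf{negation case} $\phi = \neg \psi$, we have $\M, X \models \neg \psi$ iff $\M, X \not\models \psi$, which by the induction hypothesis is equivalent to $\M', X' \not\models \psi$, i.e.\ $\M', X' \models \neg \psi$. The \textbf{conjunction case} is analogous and uses only the induction hypothesis applied to each conjunct.

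The interesting case is the \textbf{knowledge modality} $\phi = K_a \psi$. To show the forward direction, assume $\M, X \models K_a \psi$ and pick an arbitrary $Y' \in \facets{M'}$ with $a \in \chi'(X' \cap Y')$. By the \emph{back} condition~\ref{back} applied to $X \mathrel{R} X'$, there exists $Y \in \facets{M}$ such that $Y \mathrel{R} Y'$ and $a \in \chi(X \cap Y)$. The assumption $\M, X \models K_a \psi$ then yields $\M, Y \models \psi$, and the induction hypothesis (applied to $\psi$ and the pair $Y \mathrel{R} Y'$) gives $\M', Y' \models \psi$. Since $Y'$ was arbitrary, $\M', X' \models K_a \psi$. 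The converse direction is entirely symmetric, using the \emph{forth} condition~\ref{forth} in place of \emph{back}.

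There is no real obstacle here; the proof is essentially a straightforward adaptation of Hennessy--Milner-style arguments to the simplicial setting, with the indistinguishability relation encoded by the coloured intersection $a \in \chi(X \cap Y)$ instead of a Kripke accessibility edge. The only point worth double-checking is that conditions~\ref{forth} and~\ref{back} of Definition~\ref{def:bisimulation} are stated exactly so as to mirror the semantic clause for $K_a$ given in Definition~\ref{semantFormulas}, and indeed they are: both refer to facets $Y$ sharing an $a$-coloured vertex with $X$, which is the simplicial analogue of an $a$-successor.
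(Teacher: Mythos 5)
Your proof is correct and follows essentially the same route as the paper: structural induction on $\phi$, with condition~(i) handling atoms, the induction hypothesis alone handling the Boolean cases, and the \emph{back} (resp.\ \emph{forth}) condition handling the $K_a$ case. The only cosmetic difference is that you prove both directions of the equivalence explicitly, whereas the paper proves a single implication and appeals to symmetry of the bisimulation relation to obtain the converse.
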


\begin{proof}
%See Appendix~\ref{app:bisimulation_proof}.
In order to simplify our reasoning, we only demonstrate the following:
\begin{equation}\label{eq:bisim.invar}
\M,X \models \phi \qquad \mbox{implies} \qquad \M',X' \models \phi.
\end{equation}
This is enough to claim the equivalence, as the relation~$\mathrel{R}$ is symmetric.
Our proof is by induction on $\phi$.

Let $X\in\facets M$ and $X'\in\facets{M'}$ be facets with $X \mathrel{R} X'$.

\textsc{Base case.}
Let us first consider the case when~$\phi=p\in\Atom$. 
By Definition~\ref{semantFormulas} we have that $\M,X \models p$ if and only if $p \in \ell(X)$.
As $X \mathrel{R} X'$, we know that $\ell(X)=\ell'(X')$ by Definition~\ref{def:bisimulation}\ref{atoms}.
That is, we have that $p\in\ell'(X')$, which, by Definition~\ref{semantFormulas},
	holds if and only if $\M',X' \models p$.

\textsc{Induction hypothesis.}
Let us now assume that the claim it true for formulas~$\psi$ and~$\psi_1$, that is, 
	assume that for any $X\mathrel{R}X'$ the following holds:
\begin{equation}\label{eq:ih.psi}
\M,X \models \psi \qquad \text{iff} \qquad \M',X' \models \psi
\end{equation}
\begin{equation}\label{eq:ih.psi1}
\M,X \models \psi_1 \qquad \text{iff} \qquad \M',X' \models \psi_1
\end{equation}

\textsc{Induction step.}
We prove that (\ref{eq:bisim.invar}) holds for 
	(i) $\phi = \neg \psi$, 
	(ii) $\phi = \psi \wedge \psi_1$, and 
	(iii) $\phi = K_a \psi$, for an $a\in\Agents$.

(i) Let $\M,X \models \neg \psi$. 
Then by Definition~\ref{semantFormulas} we have that $\M,X \not\models \psi$, and 
	thus by Induction Hypothesis~(\ref{eq:ih.psi}) it also holds that $\M',X' \not\models \psi$.
Again by Definition~\ref{semantFormulas} we have that $\M',X' \models \neg\psi$.

(ii) If $\M,X\models\psi\wedge\psi_1$, this means that $\M,X\models\psi$
	and $\M,X\models\psi_1$. By Induction Hypothesis we then have that 
	$\M',X'\models\psi$	and $\M',X'\models\psi_1$, which is by definition
	equivalent to $\M,X\models\psi\wedge\psi_1$.
	
(iii) To this end, let $\M,X \models K_a \psi$ for an $a\in \Agents$.
By Definition~\ref{semantFormulas} we have the following:
\begin{equation}\label{eq:Kdef}
\forall Y \in \cF(M), \quad a \in \chi(X \cap Y) \;\; \rightarrow \;\; \M,Y \models \psi
\end{equation}
%	Note that this implies that $\M,X\models \psi$, and then by 
%	induction hypothesis we also have that $\M',X'\models\psi$.
	%
We want to show that $\M',X' \models K_a \psi$, that is, for every $Y' \in \cF(M')$ such that 
	$a \in \chi'(X' \cap Y')$, it holds that $\M',Y' \models \psi$.
Let $Y'$ be an arbitrary facet of $M'$ such that $a \in \chi'(X' \cap Y')$.
As $X \mathrel{R} X'$, by Definition~\ref{def:bisimulation}\ref{back} we know that there exists a $Y\in\cF(M)$
	with $a \in \chi(X \cap Y)$ and $Y\mathrel{R} Y'$.
As  $a \in \chi(X \cap Y)$, from the condition~(\ref{eq:Kdef}) we obtain that $\M,Y\models\psi$.
Finally, as $Y\mathrel{R} Y'$, we can apply our Induction Hypothesis, 
	and conclude that $\M',Y' \models \psi$.
\qed
\end{proof}

We now come back to the equality negation task for two processes.
As it turns out, there is a bisimulation between the input and output models.
%model~$\I$ and
%the output model $\O$.

\begin{lemma}\label{lem:IbisimO}
Let~$\I$ and~$\O$ be the input and output models of the 
	equality negation task, respectively, and let~$\pi$ be the projection map $\pi : \O \to \I$.
The relation $R = \{(\pi(X), X) \mid X \in \facets{\O}\} \subseteq \I \times \O$
	is a bisimulation between~$\I$ and~$\O$.
\end{lemma}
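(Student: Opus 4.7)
The plan is to verify the three conditions of Definition~\ref{def:bisimulation} directly for the relation $R$, noting that the projection $\pi : \O \to \I$ simply forgets the decision value at each vertex, i.e.\ $\pi(\B,i,d_\B) = (\B,i)$ and $\pi(\W,j,d_\W) = (\W,j)$. Thus every pair in $R$ is of the form $(\pi(X'), X')$ with $X' \in \facets{\O}$ a facet of the form $\{(\B,i,d_\B),(\W,j,d_\W)\}$ whose decision values satisfy the equality-negation precondition.

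Condition~\ref{atoms} is immediate from the definition of the product update: $\ell_\O(X') = \ell(\pi(X'))$ since the labeling in $\O$ is obtained from $\I$ by ignoring the second component of each vertex. The back condition~\ref{back} is equally straightforward: if $a \in \chi_\O(X' \cap Y')$ for some facet $Y' \in \facets{\O}$, then the $a$-colored vertex shared by $X'$ and $Y'$ projects to the $a$-colored vertex shared by $\pi(X')$ and $\pi(Y')$, so we can simply take $Y = \pi(Y')$.

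The main content is the forth condition~\ref{forth}. Suppose $X = \pi(X')$ with $X' = \{(\B,i,d_\B),(\W,j,d_\W)\}$ and $Y \in \facets{\I}$ shares an $a$-colored vertex with $X$; by symmetry assume $a = \B$, so $Y = \{(\B,i),(\W,k)\}$ for some $k \in \{0,1,2\}$. I must produce a facet $Y' \in \facets{\O}$ of the form $\{(\B,i,d_\B),(\W,k,d_\W')\}$ (to guarantee $a \in \chi_\O(X' \cap Y')$ and $\pi(Y') = Y$) that satisfies the equality-negation constraint. I will argue by cases on whether $i = k$: if $i = k$ take $d_\W' = 1 - d_\B$, and if $i \neq k$ take $d_\W' = d_\B$. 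In both cases the resulting edge is a facet of $\O$, so $Y'$ exists. The case $a = \W$ is symmetric.

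There is no real obstacle here; the only substantive observation is that each colored vertex of $\O$ belongs to exactly as many facets as its projection in $\I$ (three in each case), which is precisely what makes the forth condition go through despite the map $\pi$ being two-to-one on facets. Applying Lemma~\ref{lem:bisimulation_truth} then yields the interesting consequence that $\I$ and $\O$ satisfy the same epistemic formulas, which is the key ingredient used afterwards to rule out any obstruction formula in the standard language.
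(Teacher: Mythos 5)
Your proof is correct and follows essentially the same route as the paper: condition~\ref{atoms} is immediate, the back condition~\ref{back} follows by projecting, and the forth condition~\ref{forth} is handled by choosing the white decision value according to whether the inputs coincide ($d_\W' = 1-d_\B$ if $i=k$, $d_\W' = d_\B$ otherwise), which is exactly the paper's ``we can always choose a suitable $d_\W'$'' made explicit. No gaps.
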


\begin{proof}
The first condition of Definition~\ref{def:bisimulation} is trivially fulfilled.
%Because of the first condition in the definition of bisimulations,
%there is actually only one possible candidate,

Let us check that condition \ref{forth} is verified.
Let $X$ and $X'$ be facets of $\I$ and $\O$ respectively, such that $X \mathrel{R} X'$.
Thus, we have $X = \{(\B,i), (\W,j)\}$ and $X' = \{(\B,i,d_\B), (\W,j,d_\W)\}$,
for some $i,j,d_\B,d_\W$. % such that $i=j \iff d_\B \neq d_\W$.
Now let $a \in \Agents$ (w.l.o.g., let us pick $a=\B$), and assume that there
is some $Y \in \facets{\I}$ such that $\B \in \chi(X \cap Y)$.
So, $Y$ can be written as $Y = \{(\B,i), (\W,j')\}$ for some $j'$.
We now need to find a facet $Y'$ of $\O$ that shares a $\B$-colored
vertex with $X'$, and whose projection $\pi(Y')$ is $Y$.
Thus, $Y'$ should be of the form $Y' = \{(\B,i,d_\B), (\W,j',d_\W')\}$,
for some $d_\W'$, such that $i = j' \iff d_\B \neq d_W'$ .
But whatever the values of $i, j', d_\B$ are, we can always choose a suitable
$d_\W'$. This concludes the proof.

The third condition \ref{back} is checked similarly.
\qed
\end{proof}

\medskip

We can finally use Lemma~\ref{lem:bisimulation_truth} to show that no formula
$\phi$ will allow us to prove the unsolvability of the equality negation task.

\begin{lemma}
\label{lem:no_phi}
For the equality negation task, let $X$ be a facet of $\I[\IS]$ and let~$Y$ 
	be a facet of $\O$ such that $\pi(X) = \pi(Y)$. 
Then for every positive formula $\phi$ we have the following:
	if $\O,Y \models \phi$ then $\I[\IS],X \models \phi$.
\end{lemma}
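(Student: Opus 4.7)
The plan is to chain together two facts that have already been established in the paper: that $\I[\IS]$ projects down to $\I$ via a morphism of simplicial models (Theorem~\ref{th:mainTopInvIS}), and that $\I$ is bisimilar to $\O$ via the projection (Lemma~\ref{lem:IbisimO}). Combining these with Lemma~\ref{lem:gain_knowledge} (positive formulas travel along morphisms in the opposite direction) and Lemma~\ref{lem:bisimulation_truth} (all formulas are invariant under bisimulation) will yield the result.

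More concretely, I would first observe that the first-projection map $\pi\colon \I[\IS]\to\I$ sending a vertex $(v,E)$ to $v$ is a morphism of simplicial models: by construction of the product update, $\pi$ preserves colors ($\chi[\IS](v,E)=\chi(v)$) and labels ($\ell[\IS](v,E)=\ell(v)$), and it sends simplices to simplices since the vertices of any facet of $\I[\IS]$ share a common input assignment. Hence Lemma~\ref{lem:gain_knowledge} applies to $\pi$: for any positive formula $\phi$ and any facet $X\in\facets{\I[\IS]}$,
\[
\I,\pi(X)\models\phi \quad\Longrightarrow\quad \I[\IS],X\models\phi.
\]

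Next, I would invoke Lemma~\ref{lem:IbisimO}, which gives the bisimulation $R=\{(\pi(Y),Y)\mid Y\in\facets{\O}\}$ between~$\I$ and~$\O$. By Lemma~\ref{lem:bisimulation_truth}, bisimilar facets satisfy exactly the same epistemic formulas, so for the given $Y\in\facets{\O}$ we have
\[
\O,Y\models\phi \quad\Longleftrightarrow\quad \I,\pi(Y)\models\phi.
\]

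Finally, using the hypothesis $\pi(X)=\pi(Y)$, I would chain the two implications: $\O,Y\models\phi$ gives $\I,\pi(Y)\models\phi$, hence $\I,\pi(X)\models\phi$, and then the morphism $\pi\colon\I[\IS]\to\I$ together with positivity of $\phi$ gives $\I[\IS],X\models\phi$, as required. The only subtle point is to notice that we use two different logical transport principles in opposite directions — full equivalence on the bisimulation side (so no positivity is needed there) and only one direction on the morphism side (requiring $\phi$ to be positive) — which matches perfectly since the hypothesis is that $\phi$ is positive. No real obstacle arises; the lemma is essentially a bookkeeping corollary of the two preceding results.
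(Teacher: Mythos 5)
Your proposal is correct and follows essentially the same route as the paper: transport $\phi$ from $\O,Y$ to $\I,\pi(Y)$ via the bisimulation of Lemma~\ref{lem:IbisimO} together with Lemma~\ref{lem:bisimulation_truth}, use $\pi(X)=\pi(Y)$, and then pull $\phi$ back along the projection morphism $\pi\colon\I[\IS]\to\I$ by Lemma~\ref{lem:gain_knowledge}. The only difference is that you explicitly verify that $\pi$ is a morphism of simplicial models, a point the paper leaves implicit; otherwise the arguments coincide.
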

\begin{proof}
Let $\phi$ be a positive formula and assume $\O,Y \models \phi$.
Since we have shown in Lemma~\ref{lem:IbisimO} that $\pi(Y)$ and $Y$ are bisimilar,
by Lemma~\ref{lem:bisimulation_truth}, we have $\I,\pi(Y) \models \phi$.
Since that $\pi(Y) = \pi(X)$, 
by Lemma~\ref{lem:gain_knowledge} we obtain ${\I[\IS],X \models \phi}$.
\qed
\end{proof}

In the above lemma, the world $Y$ should be thought of as a candidate
for~$\delta(X)$. The condition $\pi(X) = \pi(Y)$ comes from the commutative
diagram of Definition~\ref{def:solving_task}. Thus, Lemma~\ref{lem:no_phi} says that we
will never find a formula $\phi$ which is true in $\delta(X)$ but false in $X$.

\longversion{
  %% Not enough space to include this in the conference version
  \medskip
  
  {\textit{Remark.}} As previously discussed, Lemma~\ref{lem:no_phi} does not apply to 
  consensus, since we know that there exists a formula proving its unsolvability.
  The reason is that % Lemma~\ref{lem:IbisimO} does not hold for consensus, that is, 
	  the projection mapping $\pi : \O \to \I$ in consensus does not induce a
	  bisimulation.
  Here we show that condition~\ref{forth} of Definition~\ref{def:bisimulation} does not hold.
  Namely, if $X = \{(\B,0), (\W,1)\}$ and $X' = \{(\B,0,1), (\W,1,1)\}$ and 
      $Y = \{(\B,0), (\W,0)\}$, then by definition of consensus there cannot exist 
	  a facet~$Y'$ with $Y \mathrel{R} Y'$ and $\B\in\chi'(X' \cap Y')$.
  Such a facet would have the form $Y'=\{(\B,0,1),(\W,0,d)\}$, for a $d\in\{0,1\}$,
	  which is not a valid world in the output model of consensus for any decision~$d$.
}

\subsection{Extended DEL}
\label{sec:extendedDEL}

In Section~\ref{sec:bisimulation_and_limits}, we have shown that no epistemic
logic formula is able to express the reason why the equality negation task is not solvable.
This seems to indicate that our logic is too weak: indeed, because of the
product update model construction that we use, we are only allowed
to write formulas about the inputs and what the processes know about each
other's inputs.
But the specification of the task is very much about the outputs too!
If we allow ourselves to use atomic propositions of the form
$\decideprop{p}{d}$, with the intended meaning that process~$p$ decides
value $d$, a good candidate for the formula $\phi$ seems to be:
\[ \phi \; = \; \bigwedge_{p,i,d} \inputprop{p}{i} \land \decideprop{p}{d}
  \implies \left( (\inputprop{\bar p}{i} \land \decideprop{\bar p}{\bar d})
  \lor (\inputprop{\bar p}{\bar i} \land \decideprop{\bar p}{d})
  \right)
\]
where $\bar p$, $\bar i$, $\bar d$ denote values different from $p$, $i$, $d$,
respectively. Note that $\bar p$ and $\bar d$ are uniquely defined (since there
are only two processes and two decision values), but for $\bar i$, there are
two possible inputs different from $i$. So, for example,
$\inputprop{p}{\bar 0}$ is actually a shortcut for $\inputprop{p}{1} \lor
\inputprop{p}{2}$.

This formula simply expresses the specification of the task: if process $p$ has
input $i$ and decides $d$, then the other process should either have the same
input and decide differently, or have a different input and decide the same.
Then hopefully $\phi$ would be true in every world of the output complex,
but would fail somewhere in the protocol complex $\I[\IS]$, meaning that
the $N$-layer message-passing model is not powerful enough to obtain this knowledge.
%\jl{I use the word "knowledge" here even though there is no knowledge operator
%in $\phi$... Maybe Yoram's "knowledge of preconditions" plays some role here?
%Maybe saying "If I decide then I must know that the other decides correctly"
%makes the formula false everywhere instead of false in at least one world?
%Is that what we're interested in? Finding the "amount of knowledge" that has to
%be reached to solve the task, and which is not available in $\I[\IS]$.}

\medskip

To be able to express such a formula, we first need to enrich our models by
saying in which worlds the atomic propositions $\decideprop{p}{d}$ are true
or false.
Let $\ExtAtom = \Atom \cup \{\decideprop{p}{d} \mid p \in \Agents, d \in \{0,1\}\}$
be the new set of atomic propositions.
The definition of the \emph{extended product update model} $\ExtProd{\I}{\T} = \ExtO$ is straightforward:
\begin{itemize}
\item Its vertices are of the form $(p,i,d)$ with $p \in \Agents$, $i \in
\{0,1,2\}$ and $d \in \{0,1\}$. The facets are $\{(\B,i,d_\B),(\W,j,d_\W)\}$
where $i=j \iff d_\B \neq d_\W$.
\item The coloring map is $\chi_{\extended{\O}}(p,i,d) = p$.
\item The atomic propositions labeling is $\ell_{\extended{\O}}(p,i,d) = \{\inputprop{p}{i},
  \decideprop{p}{d}\}$.
\end{itemize} 
%\jl[margin]{Define the general construction $\ExtProd{\M}{\A}$? It requires some technical assumptions on $\A$}
Thus, this is almost the same model as the one of
Section~\ref{sec:DEL_analysis}, but we have added some annotations to say where
the $\decideprop{p}{d}$ atomic propositions are true.
It is easily checked that the formula $\phi$ is true in every world of $\ExtO$.

\medskip

Now, we would also like the formula $\phi$ to make sense in the protocol
complex $\I[\IS]$, but it does not seem to have any information about
decision values. It only describes the input values, and which execution has
occurred. But it is precisely the role of the simplicial map $\delta : \I[\IS]
\to \O$ to assign decision values to each world of $\I[\IS]$.
Thus, given such a map $\delta$, we can lift it to a map $\extended{\delta} :
\ExtProd{\I}{\IS} \to \ExtO$ as the following lemma states.

\begin{lemma}
\label{lem:lifting_model}
Let $\M = \langle M, \chi, \ell \rangle$ be a simplicial model over the set of
agents $\Agents$ and atomic propositions $\Atom$,
and let $\delta : \M \to \O$ be a morphism of simplicial models.
Then there is a unique model $\extended{\M} = \langle M, \chi, \extended{\ell}
\rangle$ over $\ExtAtom$, where $\extended{\ell}$ agrees with $\ell$
on $\Atom$, such that $\extended{\delta} : \extended{\M} \to \ExtO$ is still a morphism
of simplicial models.
\end{lemma}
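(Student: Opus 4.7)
The plan is to read off the extended labeling directly from the map $\delta$. For each vertex $v \in \cV(M)$, the image $\delta(v)$ is a vertex of $\O$, which by construction of the output model has the form $(p, i, d)$ for some $p \in \Agents$, $i \in \Vin$, and $d \in \{0,1\}$. Since $\delta$ is chromatic we already have $\chi(v) = p$, and since $\delta$ preserves the original labeling we already have $\ell(v) = \{\inputprop{p}{i}\}$. The only new information encoded in $\delta(v)$ is the decision value $d$. So I would define
\[
\extended{\ell}(v) \;=\; \ell(v) \cup \{\decideprop{p}{d}\},
\]
where $d$ is the decision value appearing in the third coordinate of $\delta(v)$. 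Because $\decideprop{p}{d} \in \ExtAtom_p$, this is a well-formed labeling of the simplicial model $\langle M, \chi \rangle$ over $\ExtAtom$, and by construction it agrees with $\ell$ on $\Atom$.

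Next I would take $\extended{\delta}$ to be the same vertex map as $\delta$, and verify that it is a morphism $\extended{\M} \to \ExtO$. Chromaticity and the simplicial property are inherited from $\delta$, since only the labelings have changed. For labeling preservation, recall that $\extended{\ell}_{\ExtO}(\delta(v)) = \{\inputprop{p}{i}, \decideprop{p}{d}\}$ by the definition of $\ExtO$; this is exactly $\extended{\ell}(v)$ by construction. So $\extended{\delta}$ is indeed a morphism of simplicial models over $\ExtAtom$.

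For uniqueness, suppose $\extended{\ell}'$ is another extension such that $\extended{\delta} : \langle M, \chi, \extended{\ell}' \rangle \to \ExtO$ is a morphism. Then for every vertex $v$ we must have $\extended{\ell}'(v) = \extended{\ell}_{\ExtO}(\delta(v))$, by the labeling-preservation condition for a morphism. But this right-hand side is exactly the set I prescribed above, so $\extended{\ell}' = \extended{\ell}$.

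I do not expect any real obstacle: the lemma is essentially a definitional observation, where the point is that a morphism into a model whose labels carry strictly more information is forced to pull those labels back in a unique way. The only subtlety worth spelling out in the write-up is why the prescribed $\extended{\ell}(v)$ is a valid labeling (i.e., contained in $\ExtAtom_{\chi(v)}$), which follows because the added proposition $\decideprop{p}{d}$ concerns the agent $p = \chi(v)$.
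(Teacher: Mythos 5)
Your proof is correct and follows essentially the same route as the paper's own: you define $\extended{\ell}(v) = \ell(v) \cup \{\decideprop{p}{d}\}$ from the third coordinate of $\delta(v)$, check that $\delta$ remains a (chromatic, label-preserving) morphism, and derive uniqueness from the fact that any morphism into $\ExtO$ forces $\extended{\ell}(v) = \ell_{\ExtO}(\delta(v))$. Your write-up is if anything slightly more explicit than the paper's (e.g.\ noting that the added proposition concerns the agent $\chi(v)$, so the labeling remains well-formed), but there is no substantive difference.
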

\begin{proof}
All we have to do is label the worlds of $\M$ with the $\decideprop{p}{d}$
atomic propositions, such that $\delta$ is a morphism of simplicial models.
Thus, we define $\extended{\ell} : M \to \powerset{\ExtAtom}$ as
$\extended{\ell}(m) = \ell(m) \cup \{\decideprop{p}{d}\}$, where $\delta(m) =
(p,i,d) \in O$.
Then~$\delta$ is still a chromatic simplicial map (since we did not change the
underlying complexes nor their colors), and moreover we have
$\extended{\ell}(m) = \ell_{\extended{\O}}(\delta(m))$ for all $m$.
The model $\extended{\M}$ is unique since any other choice of
$\extended{\ell}(m)$ would have broken this last condition, so $\delta$ would
not be a morphism of simplicial models. \qed
\end{proof}

We can finally prove that the equality negation task is not solvable:

\begin{theorem}
The equality negation task for two processes is not solvable in the $N$-layer message-passing model.
\end{theorem}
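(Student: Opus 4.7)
The plan is to assume for contradiction that a morphism $\delta : \I[\IS] \to \O$ satisfying the commutative diagram of Definition~\ref{def:solving_task} exists, and to derive a contradiction from the formula $\phi$ introduced just before the theorem. First, I would invoke Lemma~\ref{lem:lifting_model} to obtain the unique lift $\extended\delta : \ExtProd{\I}{\IS} \to \ExtO$, which attaches to every vertex of $\I[\IS]$ the decision atom $\decideprop{p}{d}$ prescribed by $\delta$. A direct inspection of the construction of $\ExtO$ shows that every facet of $\ExtO$ encodes an input/decision combination respecting the equality negation specification, so $\phi$ holds at every facet of $\ExtO$. Since $\phi$ is equivalent to a positive formula (rewrite each implication as a disjunction, each $\neg\decideprop{p}{d}$ as $\decideprop{p}{\bar d}$, and each $\neg\inputprop{p}{i}$ as the disjunction of the two remaining input atoms of $p$), Lemma~\ref{lem:gain_knowledge} applied to $\extended\delta$ propagates this truth to every facet of $\ExtProd{\I}{\IS}$.

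Next, I would exploit the six-cycle of $\I$ through the corner vertices $(\B,0),(\W,0),(\B,1),(\W,1),(\B,2),(\W,2)$ and back to $(\B,0)$. By Theorem~\ref{th:mainTopInvIS}, this cycle persists inside the subdivision $\I[\IS]$: each of its six original edges is subdivided into $3^N$ sub-edges, and concatenating them yields a closed walk in $\I[\IS]$ passing through the same six corner vertices. At each facet along this walk, $\phi$ forces the decisions of its two endpoints to be equal when their input values differ and opposite when they coincide. Because the pair of input values is constant along any given subdivided edge and $3^N$ is odd, a ``$\neq$''-subdivided edge flips the decisions at its two corners while an ``$=$''-subdivided edge preserves them. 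Along the six-cycle the constraints alternate as $\neq,=,\neq,=,\neq,=$ (for the input pairs $(0,0),(1,0),(1,1),(2,1),(2,2),(0,2)$), yielding three flips in total. Hence the decision of $(\B,0)$ after one full loop would be the negation of its starting value, a contradiction.

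The step that I expect to need the most care, though conceptually mild, is the bookkeeping along this subdivided six-cycle: one must verify that the six original corner vertices of $\I$ indeed remain as vertices of $\I[\IS]$, that consecutive subdivided edges in the cycle meet precisely at those corners, and that the input atoms labelling every intermediate facet of a subdivided edge agree with those of its corner endpoints, so that $\phi$ does impose the expected local constraint there. No Sperner-type combinatorial topology is required; once the bookkeeping is in place, the argument reduces to a single one-bit parity computation along a closed walk, and the epistemic formula $\phi$ crisply records why the equality negation task is unsolvable.
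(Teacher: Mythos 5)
Your proposal is correct and follows essentially the same route as the paper's proof: assume $\delta$ exists, lift it via Lemma~\ref{lem:lifting_model}, transfer the validity of $\phi$ from $\ExtO$ to $\ExtProd{\I}{\IS}$ by Lemma~\ref{lem:gain_knowledge}, and then derive a contradiction by propagating decision constraints along subdivided edges of $\I[\IS]$, using that equal-input edges flip decisions across their $3^N$ (odd) sub-edges while distinct-input edges preserve them. The only difference is cosmetic: you run the parity count around the six-cycle through all corner vertices (three flips), whereas the paper propagates along three distinct-input edges and reaches the contradiction on the single equal-input edge $(\B,2)$--$(\W,2)$; your bookkeeping concerns (corners of $\I$ surviving as solo-execution vertices of the subdivision, input labels being preserved on intermediate vertices) are exactly what Theorem~\ref{th:mainTopInvIS} and the label-preserving product update guarantee.
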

%\begin{proof}
\textit{Proof.}\;
Let us assume by contradiction that the task is solvable, i.e., by Definition~\ref{def:solving_task},
there exists a morphism of simplicial models $\delta : \I[\IS] \to \O$ that
makes
\begin{wrapfigure}[9]{r}{0.35\textwidth}%
\centering
 \vspace{-16pt}
\begin{tikzpicture}
  \node (s) {$\cI[\IS]$};
  \node (xy) [below=2 of s] {$\O$};
  \node (s') [right=0.3 of s] {$\ExtProd{\cI}{\IS}$};
  \node (xy') [below=1.85 of s'] {$\ExtO$};
  \node (x) [left=of xy] {$\cI$};
  \draw[<-] (x) to node [sloped, above] {$\pi$} (s);
  \draw[->, right] (s) to node {$\delta$} (xy);
  \draw[->, right] (s') to node {$\widehat{\delta}$} (xy');
  \draw[->] (xy) to node [below] {$\pi$} (x);
\end{tikzpicture}
\end{wrapfigure}

\noindent
the diagram commute.
By Lemma~\ref{lem:lifting_model}, we can lift~$\delta$ to a morphism
$\extended\delta : \ExtProd{\I}{\IS} \to \ExtO$ between the extended models.
%\begin{center}
%\begin{tikzpicture}[scale=0.7]
%  \node (p) at (0,2) {$\I[\IS]$};
%  \node (i) at (3,0) {$\I$};
%  \node (o) at (6,2) {$\O$};
%  \node (ep) at (0,3.5) {$\ExtProd{\I}{\IS}$};
%  \node (eo) at (6,3.5) {$\ExtO$};
%  \draw[->, below left] (p) to node {$\pi$} (i);
%  \draw[->, above] (p) to node {$\delta$} (o);
%  \draw[->, above] (ep) to node {$\delta$} (eo);
%  \draw[->, below right] (o) to node {$\pi$} (i);
%\end{tikzpicture}
%\end{center}
As we remarked earlier, the formula $\phi$ is true in every world of
$\ExtO$. Therefore, it also has to be true in every world of $\ExtProd{\I}{\IS}$.
Indeed, for any world $w$, since $\ExtO,\delta(w) \models \phi$, and~$\delta$ is a morphism, by Lemma~\ref{lem:gain_knowledge}, we must have
$\ExtProd{\I}{\IS},w \models \phi$.
We will now derive a contradiction from this fact.

Recall that the protocol complex $\I[\IS]$ is just a subdivision of the input
complex $\I$, as depicted below. (For simplicity, some input values have been omitted in the
vertices on a subdivided edge; it is the same input as the extremity of the
edge which has the same color. Also, the picture shows only one subdivision, 
but our reasoning is unrestricted and it applies to any number of layers~$N$.)
\begin{center}
\begin{tikzpicture}[auto, scale=2, whitevertex/.style={draw=black,thick,circle,fill=white,inner sep=2pt,minimum size=6pt}, blackvertex/.style={draw=black,thick,circle,fill=black,inner sep=2pt,minimum size=6pt,text=white}]

\node[whitevertex] (w0) at (0,0) {$0$};
\node[whitevertex] (w1) at (1,1) {$1$};
\node[whitevertex] (w2) at (2,0) {$2$};
\node[blackvertex] (b0) at (0,1) {$0$};
\node[blackvertex] (b1) at (1,0) {$1$};
\node[blackvertex] (b2) at (2,1) {$2$};
\path (w0) edge (b0) edge
			node [below, pos=0.12] {$w_1$}
			node [below, pos=0.55] {$w_2$} (b1)
		edge (b2)
	  (w1) edge (b0) edge (b1) edge (b2)
      (w2) edge (b0) edge (b1) edge
      		node[right, pos=0.9] {$w'$}
      		node[right, pos=0.1] {$w''$} (b2);
\foreach \i in {0,1,2} {
  \foreach \j in {0,1,2} {
    \node[blackvertex] at ($(w\i)!0.33!(b\j)$) {};
    \node[whitevertex] at ($(b\i)!0.33!(w\j)$) {};
  }
}
\end{tikzpicture}
\end{center}

Let us start in some world $w_1$ on the $(\W,0)-(\B,1)$ edge.
In the world $w_1$, the two processes have different inputs. Since in
$\ExtProd{\I}{\IS}$, the formula $\phi$ is true in $w_1$, the decision values
have to be the same. Without loss of generality, let us assume that in $w_1$,
both processes decide $0$.

We then look at the next world $w_2$, which shares a black vertex with $w_1$.
Since the inputs are still $0$ and $1$, and $\phi$ is true, and we assumed
that process $\B$ decides $0$, then the white vertex of $w_2$ also has to
decide $0$.

We iterate this reasoning along the $(\W,0)-(\B,1)$ edge, then along the
$(\B,1)-(\W,2)$ edge, and along the $(\W,0)-(\B,2)$ edge: all the vertices on
these edges must have the same decision value $0$. 
Thus, on the picture, the top right $(\B,2)$ corner has to decide~$0$, as well
as the bottom right $(\W,2)$ corner.

Now in the world $w'$, the two input values are equal,
so the processes should decide differently.
Since the black vertex decides $0$, the white vertex must have decision
value~$1$.
If we keep going along the rightmost edge, the decision values must alternate:
all the black vertices must decide~$0$, and the white ones decide~$1$.
Finally, we reach the world $w''$, where both decision values are~$0$, whereas
the inputs are both~$2$. So the formula $\phi$ is false in $w''$, which
is a contradiction. 

\qed
%\end{proof} % moved because wrapfigures do not work in the proof environment 
\medskip

It is interesting to compare the epistemic formula $\phi$ that we used in this paper to prove the unsolvability of equality negation, with the one (let us call it $\psi$) that was used in~\cite{gandalf2018} to prove the impossibility of solving consensus.
In the case of consensus, we did not need the ``Extended DEL'' framework. The formula $\psi$ was simply saying that the processes have common knowledge of the input values. This formula is quite informative: it tells us that the main goal of the consensus task is to achieve common knowledge.
On the other hand, the formula $\phi$ is less informative: it is simply stating the specification of the equality negation task.
\longversion{
  It does not even seem to be talking about knowledge, since there
  are no $K$ or $C$ operators in the formula.
  In fact, the epistemic content of $\phi$ is hidden in the $\decideprop{p}{d}$
  atomic propositions. Indeed, their semantics in $\ExtProd{\cI}{\IS}$ is
  referring to the decision map $\delta$, which assigns a decision value $d$ to
  each vertex of $\cI[\IS]$. The fact that we assign decisions to vertices means
  that each process must decide its output \emph{solely according to its
  knowledge}.
}
\medskip

Despite the fact that it produces less informative formulas, the ``Extended DEL'' proof method seems to be able to prove \emph{any} impossibility result.
Let $\cT = \langle T, \sim, \pre \rangle$ be a task action model, on the input model $\cI$, and let $\cP$ be a protocol action model.
Remember that the elements of $T$ are functions $t : \Agents \to \Vout$ assigning a decision value to each agent.
Let $\phi$ denote the following formula:
\begin{equation}\label{formula}
\phi \; = \; \bigwedge_{X \in \cF(\cI)} \left( \bigwedge_{p \in \Agents}
\inputprop{p}{X(p)} \implies \bigvee_{\substack{t \in T\\ \cI,X \models\, \pre(t)}} \;
\bigwedge_{p \in \Agents} \decideprop{p}{t(p)} \right)
\end{equation} 
where $X(p)$ denotes the input value of process $p$ in the input simplex $X$.
Then we get the following Theorem (whose proof is in Appendix~\ref{app:specification_proof}).

\begin{theorem}
\label{thm:logical_specification}
The task $\cT$ is solvable in the protocol $\cP$ if and only if there exists an extension $\ExtProd{\cI}{\cP}$ of $\cI[\cP]$ (assigning a single decision value to each vertex of $\cI[\cP]$) such that $\phi$ from (\ref{formula}) is true in every world of $\ExtProd{\cI}{\cP}$. 
\end{theorem}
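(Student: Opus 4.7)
The formula $\phi$ encodes, at every world $w$, the requirement that the tuple of decision values carried by $w$'s vertices coincides with $t(p)$ for some action $t \in T$ whose precondition $\pre(t)$ holds at the input simplex $X = \pi(w)$. By construction of the product update $\cI[\cT]$, its facets are exactly the pairs $(X,t)$ of this form. So $\phi$ essentially says ``the decision labels on each facet of $\cI[\cP]$ correspond to a facet of $\cI[\cT]$ with the same input simplex,'' which is tantamount to the existence of a commuting morphism $\delta$. I would therefore prove each direction by unpacking these definitions.

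For the forward direction, assume a morphism $\delta : \cI[\cP] \to \cI[\cT]$ exists. Generalize Lemma~\ref{lem:lifting_model} from $\IS$ to an arbitrary protocol $\cP$ and from the equality-negation output model to $\cI[\cT]$: each vertex $v$ of $\cI[\cP]$ acquires the decision label $\decideprop{\chi(v)}{d}$ determined by $\delta(v) = (v,d)$. For any world $w$ of the resulting $\ExtProd{\cI}{\cP}$, the image $\delta(w)$ is a facet of $\cI[\cT]$, hence a pair $(X,t)$ with $\pre(t)$ true at $X$; by commutativity $\pi \circ \delta = \pi$, we have $X = \pi(w)$, and the action $t$ is exactly the witness for the inner disjunction of $\phi$ at the index $X' = X$. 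The conjuncts indexed by $X' \neq \pi(w)$ are vacuously satisfied, since the input atoms at $w$ describe only $\pi(w)$.

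For the reverse direction, given such an extension of $\cI[\cP]$, define $\delta$ on vertices by $\delta(v) = (v, d_v)$, where $d_v$ is the unique decision label at $v$. For each facet $w$ of $\cI[\cP]$, applying $\phi$ at $w$ with $X = \pi(w)$ produces an action $t_w \in T$ such that $\pre(t_w)$ is true at $\pi(w)$ and $t_w(p) = d_{v_p}$ for every $p$-colored vertex $v_p \in w$; this shows both that each $\delta(v_p)$ is a genuine vertex of $\cI[\cT]$, and that $\delta(w)$ equals the facet $(\pi(w), t_w)$ of $\cI[\cT]$, so $\delta$ is a simplicial map. The commutativity $\pi \circ \delta = \pi$ and the preservation of labeling hold by construction.

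The main obstacle is verifying that $\delta$ in the reverse direction is a well-defined simplicial map, i.e., that it glues consistently across overlapping facets. Two facets $w$, $w'$ sharing a vertex $v$ may yield distinct witnesses $t_w, t_{w'}$, but they automatically satisfy $t_w(\chi(v)) = d_v = t_{w'}(\chi(v))$, i.e., $t_w \sim_{\chi(v)} t_{w'}$, which is precisely the equivalence used to glue facets in the product update $\cI[\cT]$. Hence $\delta(v)$ depends only on $v$, and the images of facets agree on shared vertices. Beyond this consistency check, the rest of the argument is a direct unfolding of the product update construction and the semantics of the specification formula $\phi$.
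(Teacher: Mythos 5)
Your proposal is correct and follows essentially the same route as the paper's proof: the forward direction lifts $\delta$ to the extended models (a generalization of Lemma~\ref{lem:lifting_model}) and verifies $\phi$, and the converse reads the decision labels off the vertices and uses modus ponens on $\phi$ at each facet to extract an action $t$ with $\cI,\pi(w)\models\pre(t)$ whose values match the labels, which is exactly how the paper shows $\delta$ sends simplices to simplices of $\cI[\cT]$. The only cosmetic differences are that you check $\phi$ directly in $\ExtProd{\cI}{\cP}$ instead of checking it in $\ExtProd{\cI}{\cT}$ and pulling it back with Lemma~\ref{lem:gain_knowledge}, and your gluing discussion in the converse is not really needed, since the extension already assigns a single decision value to each vertex, so $\delta$ is automatically well defined on vertices.
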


\begin{proof}
($\Rightarrow$): We already proved this direction in Section~\ref{sec:extendedDEL}. Assume that the task is solvable, i.e., by Definition~\ref{def:solving_task}, there is a morphism $\delta : \cI[\cP] \to \cI[\cT]$ such that $\pi \circ \delta = \pi$.
The model $\ExtProd{\cI}{\cP}$ is given Lemma~\ref{lem:lifting_model}, where the assignment of decision values is the one given by $\delta$. Then $\phi$ is easily seen to be true in every world of $\ExtProd{\cI}{\cT}$, and by Lemma~\ref{lem:gain_knowledge}, it is also true in every world of $\ExtProd{\cI}{\cP}$.
\medskip

\noindent
($\Leftarrow$): For the converse, assume that there is a model $\ExtProd{\cI}{\cP}$ where the formula~$\phi$ is true in every world.
Then we build a map $\delta : \cI[\cP] \to \cI[\cT]$ as follows.
If a vertex $(i,E)$ of $\ExtProd{\cI}{\cP}$, colored by agent $p$, is labeled with the atomic proposition $\decideprop{p}{d}$, we send it to the vertex $\delta(i,E) := (i,d)$ of $\cI[\cT]$.

By definition, we have the commutative diagram $\pi(i,E) = i = \pi \circ \delta(i,E)$. We now need to show that~$\delta$ is a morphism of simplicial models.
The coloring and labeling maps are preserved, since by definition they just copy the coloring and labeling of $\cI$. We still have to prove that $\delta$ sends simplices to simplices.

Let $Y$ be a facet of $\cI[\cP]$. Let $X = \pi(Y) \in \cF(\cI)$ be the facet of the input complex corresponding to the initial values of the processes in the execution~$Y$.
Then we have $\cI[\cP], Y \models \bigwedge\limits_{p \in \Agents}
\inputprop{p}{X(p)}$, and since $\ExtProd{\cI}{\cP}, Y \models \phi$, by modus ponens there must be some action $t \in T$, with $\cI, X \models \pre(t)$, such that $\ExtProd{\cI}{\cP}, Y \models \bigwedge\limits_{p \in \Agents} \decideprop{p}{t(p)}$. 
Thus, the vertex $v$ of $Y$ which is colored by $p$ must be labeled with the atomic proposition $\decideprop{p}{t(p)}$, and so the map $\delta$ sends it to the vertex $(i,t(p))$ of $\cI[\cT]$.
By definition of the action model, since $\cI, X \models \pre(t)$, the set of vertices $\delta(Y) = \{ (i, t(p)) \mid i \in X, p = \chi(i) \}$ is a simplex of $\cI[\cT]$.
Therefore, the map $\delta$ is a simplicial map.

\qed
\end{proof}

This theorem implies that the situation of Section~\ref{sec:bisimulation_and_limits} cannot happen with the ``Extended DEL'' approach: if the task is not solvable, there necessarily exists a world $X$ of $\ExtProd{\cI}{\cP}$ where the formula fails. Of course, finding such a world is usually the hard part of an impossibility proof, but at least we know it exists.
In fact, in the particular case of read/write protocols (or, equivalently, layered message-passing), the solvability of tasks is known to be undecidable when there are more than three processes~\cite{GafniK99,HerlihyR97}. Thus, according to our Theorem, given a formula $\phi$, the problem of deciding whether there exists a number of layers~$N$ and an extension of $\cI[\IS]$ which validates the formula, is also undecidable.

Notice that in Theorem~\ref{thm:logical_specification}, we characterized the solvability of a task without referring to~$\cT$ itself: the formula $\phi$ contains all the information of $\cT$. Thus,  instead of relying on the commutative diagram of Definition~\ref{def:solving_task}, one can specify a task directly as a logical formula.

\longversion{
  \section{Conclusion}

The equality negation task is known to be unsolvable in the wait-free read/write model.
In this paper, we gave a new proof of this result, using the simplicial complex semantics of DEL that we proposed in~\cite{gandalf2018}.
%The main purpose of this kind of proof is that the logical formula witnessing the unsolvability of a task usually helps us understand the epistemic content of this task.
There are two purposes of doing this.
First, the logical formula witnessing the unsolvability of a task usually helps us understand the epistemic content of this task. Unfortunately, as it turns out, the logical formula that we obtained in the end is less informative than we hoped.
Secondly, this is a nice case study to test the limits of our DEL framework.
Indeed, we proved in Section~\ref{sec:bisimulation_and_limits} that the basic language of DEL, where formulas are only allowed to talk about input values, is too weak to express the reason why the task is not solvable.
To fix this issue, we introduced a way to extend our logical language in order to have more expressive formulas.

%It seems that the ``Extended DEL'' proof method of Section~\ref{sec:extendedDEL} can actually be used to prove any impossibility result.
%Indeed, given any task $\cT = \langle T, \sim, \pre \rangle$, with input model $\cI$ we can always write a formula with atomic propositions of the form $\inputprop{p}{i}$ and $\decideprop{p}{d}$ which specifies exactly the behavior of the task:
%\[ \bigwedge_{X \in \cF(\cI)} \left( \bigwedge_{p \in \Agents}
%\inputprop{p}{X(p)} \implies \bigvee_{\substack{t \in T\\ \cI,X \models\, \pre(t)}} \;
%\bigwedge_{p \in \Agents} \decideprop{p}{t(p)} \right)
%\]
%where $X(p)$ denotes the input value of process $p$ in the input simplex $X$, and $t(p)$ the decision value of process $p$ in $t$.
%We conjecture that the task $\cT$ is solvable in a protocol $\cP$ if and only if this formula is true in every world of $\ExtProd{\cI}{\cP}$.
%This means that, if the task is not solvable, there necessarily exists a world $X$ of $\ExtProd{\cI}{\cP}$ where the formula fails. Of course, finding such a world is usually the hard part of an impossibility proof; but at least we would know for sure that the situation of Section~\ref{sec:bisimulation_and_limits} cannot happen with Extended DEL.
%This idea could also lead to an alternative way of specifying tasks: instead of relying on the commutative diagram of Definition~\ref{def:solving_task}, one could specify a task directly as a logical formula.
}

%
%
%
% ---- Bibliography ----
%
% BibTeX users should specify bibliography style 'splncs04'.
% References will then be sorted and formatted in the correct style.
%
%\bibliographystyle{splncs04}
%\bibliography{lit.bib}
%
\section*{Acknowledgements}

The authors were supported by 
the academic chair ``Complex Systems Engineering'' of Ecole Polytechnique-ENSTA-T\'{e}l\'{e}com-Thal\`es-Dassault-DGA-Naval Group-FX-FDO-Fondation ParisTech, 
by DGA project ``Validation of Autono\-mous Drones and Swarms of Drones'', 
%The project has received funding from
by the UNAM-PAPIIT project IN109917, 
by the France-Mexico Binational SEP-CONACYT-ANUIES-ECOS grant M12\-M01,
by the European Research Council (ERC) 
under the European Union's Horizon 2020 research and innovation 
programme under grant agreement No 787367 (PaVeS), as well as by 
the Austrian Science~Fund~(FWF) through Doctoral College LogiCS (W1255-N23).

%\bibliographystyle{splncs04}
%\bibliography{lit}

\newpage

\appendix

\section{Distributed computing through combinatorial topology}

In this appendix, we briefly describe the usual topological approach of task computability, focusing on the special case of two processes; additional details appear for example in~\cite{HerlihyKR:2013}.
The DEL framework that we expose in Section~\ref{sec:prelim} is based on this approach. For instance, the notion of \emph{carrier map} in the context of the topological approach, is replaced by the product update operation in DEL.

\medskip

\subsection{The topological definition of task solvability}

To model a computation for $n+1$ processes,  a simplicial complex of dimension $n$ is used; in the case of two processes, a $1$-dimensional simplicial complex is simply a graph.
Thus, to simplify the formalism, we  use here graph-theoretic notions instead
of simplicial complexes.
%This is a universal {read/write wait-free} model: 
%\emph{universal} means that
%any other read/write wait-free model, solves the same set of tasks;   \emph{wait-free}
% means that
% all interleavings of operations are allowed (no wait operations or clocks);
% by \emph{read/write} we mean that there are no powerful communication primitives available
%(e.g. test-and-set).
%Such models include, for example, models where processes communicate through a shared memory
%composed of read/write registers, and the message-passing model described next.
%This model  is the basis for models for an arbitrary number of processes, but simplicial
%complexes are needed
%to state such models , while for the case of two processes, graph theoretic notions are sufficient.

%A two-process task is specified in terms of a pair of graphs,
%one for the processes' possible inputs, and one for the legal
%decisions the processes can take.  Whether a two-process protocol
%exists for a task can be completely characterized in terms of
%connectivity properties of these graphs.  But to keep this in mind, we refer to
%the edges and to the vertices of a graph as \emph{simplices}, namely, a simplex is a set of
%one vertex, or a set of two vertices connected by an edge.
%Moreover, if a protocol exists, then that protocol is essentially a form of approximate agreement.

For an (undirected) graph $\cG$, we write $\cV(\cG)$ for the set of vertices of $\cG$. To mimic the usual vocabulary of the simplicial approach to distributed computability, we use the word \emph{simplex} to refer either to a vertex or an edge of the graph. The edges will usually be identified with two-element sets $\{s, s'\}$, where $s,s' \in \cV(\cG)$ are the two endpoints of the edge; and vertices are identified with singletons $\{s\}$ for $s\in \cV(\cG)$. 
 We say that a graph
is \emph{chromatic} if it is
equipped with a coloring $\chi:\cV(\cG)\rightarrow C$,
%where $C=\set{A,B}$ are the names of the two processes,
where $C$ is a finite set of colors, such that
the vertices of every edge have distinct colors.
% we also say that the coloring is \emph{proper}.
A graph $\cG$ is \emph{labeled} if it is equipped with a labeling $\ell:\cV(\cG) \to L$, where $L$ is a set of labels (without any extra condition).
To model a computation between two processes $B$ and $W$, we use a chromatic labeled graph $\langle \cG, \chi, \ell \rangle$.
The set of colors $C = \set{B,W}$ is the set of process names. %, and the coloring of a vertex $s$ is written $\name(s) = \chi(s)$.
The labeling $\ell(s)$ of $s$ is called the \emph{view} of the corresponding process.
%If $s$ is a vertex in a labeled, chromatic
%graph, we denote by $\name(s)$ the value $\chi(s)$. We (not necessarily proper) color $s$ also with a  value $\view(s)$.
Moreover, we require that each vertex in a chromatic
labeled graph is uniquely identified by its name $\chi(s)$ and label $\ell(s)$.
%values of $\name(\cdot)$
%and $\view(\cdot)$.

Given two graphs $\cG$ and $\cH$, a map $\mu: \cV(\cG) \to \cV(\cH)$ from the vertices of~$\cG$ to the vertices of~$\cH$ is a
 \emph{simplicial map} if whenever
$\set{\vs_0,\vs_1}$ is an edge in~$\cG$, then
$\set{\mu(\vs_0),\mu(\vs_1)}$ is either an edge or a vertex of $\cH$.
If $\vs_0 \neq \vs_1$ implies
$\mu(\vs_0) \neq \mu(\vs_1)$, then the map is said to be \emph{rigid}.
(Thus, a rigid simplicial map is  a~{graph homomorphism}).
When $\cG$ and $\cH$ are chromatic, we usually assume that the
simplicial map $\mu$ preserves names: $\chi(\vs)=\chi(\mu(\vs))$.
Thus, {chromatic simplicial} maps are rigid.
The following is an important property of simplicial maps, easy to check.
\begin{lemma}
\label{lem:connectSM}
The image of a connected graph  under a simplicial map is
connected.
\end{lemma}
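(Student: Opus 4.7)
The plan is to argue directly from the path-based definition of connectivity. First I would pick two arbitrary vertices $u, v$ in the image subgraph $\mu(\cG)$ and choose preimages $s, s' \in \cV(\cG)$ with $\mu(s) = u$ and $\mu(s') = v$. Since $\cG$ is connected, there exists a finite sequence $s = s_0, s_1, \ldots, s_n = s'$ of vertices of $\cG$ such that $\{s_i, s_{i+1}\}$ is an edge of $\cG$ for each $0 \le i < n$. The goal is then to push this path forward through $\mu$ and obtain a walk in $\mu(\cG)$ from $u$ to $v$.

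The key step is to apply the definition of a simplicial map to each edge of the path. For each $i$, since $\{s_i, s_{i+1}\}$ is an edge of $\cG$, the image $\{\mu(s_i), \mu(s_{i+1})\}$ is either an edge of $\cH$ or a single vertex of $\cH$. In the first case, $\mu(s_i)$ and $\mu(s_{i+1})$ are adjacent in $\mu(\cG)$; in the second case, $\mu(s_i) = \mu(s_{i+1})$, so they are trivially in the same connected component. Consequently, in either case $\mu(s_i)$ and $\mu(s_{i+1})$ lie in the same connected component of $\mu(\cG)$.

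Chaining this observation over all $i$ from $0$ to $n-1$ yields that $u = \mu(s_0)$ and $v = \mu(s_n)$ lie in the same connected component of $\mu(\cG)$. Since $u$ and $v$ were arbitrary vertices of $\mu(\cG)$, the image subgraph is connected. I do not anticipate any real obstacle here; the only minor subtlety is that the definition of simplicial map in this paper explicitly allows an edge to be collapsed to a vertex (the map need not be rigid), and the argument must handle that case, which is done simply by noting that a collapsed edge contributes a trivial step in the walk.
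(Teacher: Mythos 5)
Your argument is correct and is exactly the standard path-pushing argument that the paper implicitly relies on when it states this lemma as ``easy to check'' without proof, including the necessary handling of edges that the (non-rigid) simplicial map collapses to a single vertex. Nothing is missing.
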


Given two graphs $\cG$ and $\cH$,
a~\emph{carrier map} $\Phi$ from $\cG$ to $\cH$, written $\Phi:\cG \to 2^\cH$, takes each
simplex $\sigma\in\cG$ to a~subgraph $\Phi(\sigma)$ of~$\cH$,
such that
$\Phi$ satisfies the following \emph{monotonicity} property:
for all simplices $\sigma,\tau\in\cG$,
if $\sigma\subseteq\tau$,
then $\Phi(\sigma)\subseteq\Phi(\tau)$.
 If the set of colors of the subgraph  $\Phi(\sigma)$
is equal to the set of colors of $\sigma$, then $\Phi$ is \emph{name-preserving}.
%A  carrier map is \emph{connected} if it sends each vertex to a non-empty set of vertices, and each edge to a connected graph.

Let $\Vin$ be a set of \emph{input values}, and $\Vout$ a set of \emph{output
values}. A \emph{task} for $B$ and $W$ is a triple
$(\cI, \cO, \Delta)$, where
%\jl{pure is undefined}
\begin{itemize}
\item $\cI$ is a pure chromatic \emph{input graph}
colored by $\set{B,W}$ and labeled by $\Vin$; 
\item $\cO$ is a pure chromatic \emph{output graph}
colored by $\set{B,W}$ and labeled by $\Vout$;
\item $\Delta$ is a name-preserving carrier map from $\cI$ to
$\cO$.
\end{itemize}
The input graph defines all the possible ways the two processes can
start the computation, the output graph defines all the possible
ways they can end, and the carrier map defines which input can lead to
which outputs.
More precisely, each edge $\set{(B,i), (W,j)}$ in $\cI$ defines a possible input configuration
(initial system state), where the local state of $B$ consists of the
 input value $i \in \Vin$ and the local state of $W$ consists of  input value $j \in \Vin$.
The processes communicate with one another,
and each eventually decides on an output value and halts.
If~$B$ decides~$x$, and $W$ decides $y$, then there is an output configuration
represented by an edge $\set{(B,x), (W,y)}$ in the output graph.
If the following condition is verified,
\begin{equation*}
\set{(B,x), (W,y)} \in \Delta(\set{(B,i), (W,j)}),
\end{equation*}
then we say that this run respects the task specification $\Delta$.
%Moreover, if $A$ runs solo,
%without ever hearing from $B$,
%it must decide a vertex $(A,x)$ in $\Delta((A,a))$.
%Naturally, $B$ is subject to the symmetric  constraint.
%We have seen two examples of tasks in the Introduction.

We now turn our attention from tasks,
to  the model of computation in which we want to solve them.
Consider a protocol execution in which the processes exchange information
through the channels (message-passing, read-write memory, or other) provided by the
model.
At the end of the execution,
each process has its own view (final state).
The set of all possible final views themselves form a chromatic graph.
Each vertex is a pair $(P,p)$, where $P$ is a process name,
and~$p$ is the view (final state) of $P$ at the end of some execution.
A pair of such vertices $\set{(B,p),(W,q)}$ is an edge if there is some
execution where $B$ halts with view $p$ and $W$ halts with view $q$.
This graph is called the \emph{protocol graph}.

%\jl{strict is undefined}
There is a carrier map $\Xi$ from $\cI$ to $\cP$,
called the \emph{execution carrier map},
that carries each input simplex to a subgraph of the protocol graph.
$\Xi$ carries each input vertex $(P,v)$ to the solo execution in which $P$
finishes the protocol without hearing from the other process.
It carries each input edge $\set{(B,i),(W,j)}$ to the subgraph of
executions where $B$ starts with input $i$ and $W$ with input $j$.

\index{decision map}
The protocol graph is related to the output graph by a \emph{decision map}
$\delta$ that sends each protocol graph vertex $(P,p)$ to an output graph
vertex $(P,w)$, labeled with the same name.
Operationally, this map should be understood as follows:
if there is a protocol execution in which $P$ finishes with view $p$
and in this view it chooses output $w$,
then $(P,p)$ is a vertex in the protocol graph,
$(P,w)$ a vertex in the output graph,
and $\delta((P,p)) = (P,w)$.
It is easy to see that $\delta$ is a simplicial map,
carrying edges to edges,% \jl{you seem to be assuming that the protocol solves the task here, but this isn't clear}
because any pair of mutually compatible final views
yields a pair of mutually compatible decision values.

The composition of the decision map $\delta:\cP \to \cO$ with the carrier
map $\Xi:\cI \to 2^\cO$ is a carrier map $\Phi = \delta \circ \Xi:\cI~\rightarrow~2^\cO$, 
%(Fact~\ref{fact:two:carrierMapCompos}).
and we say that $\Phi$ is \emph{carried by} $\Delta$
%written $\Phi\subseteq\Delta$,
whenever $\Phi(\sigma) \subseteq \Delta(\sigma)$ for every simplex $\sigma\in\cI$.
%The decision map $\delta$ is \emph{is carried by} the carrier map $\Delta$ if
%\begin{itemize}
%\item for each input vertex $s$, $\delta(\Xi(s)) \subseteq \Delta(s)$, and
%\item
%for each input edge $\sigma$,
%$\delta(\Xi(\sigma)) \subseteq \Delta(\sigma)$.
%\end{itemize}
%
We can now define what it means for a protocol to solve a task.
\begin{definition}
\label{def:solving}
The protocol $(\cI,\cP,\Xi)$ \emph{solves} the task $(\cI, \cO, \Delta)$ 
%if the strict carrier map $\Xi$ from the input graph $\cI$ to the protocol graph $\cP$,
if there exists a simplicial \emph{decision map} $\delta$ from $\cP$ to $\cO$
such that $\delta \circ \Xi$ is carried by $\Delta$.
\end{definition}

\subsection{The layered message-passing model}
\label{app:layered_model}

We now define the $N$-layer protocol of Section~\ref{sec:layered_model} using the carrier map formalism described above.
Recall that, given an initial state, an execution can be specified by a sequence of $N$ symbols over the alphabet $\set{\bot,B,W}$,
meaning that, in the $i$-th layer,  if the $i$-th symbol is $\bot$ then both messages arrived,
and if it is $B$  (resp. $W$) then only $B$'s message failed to arrive (resp. $W$).

Let us first take a look at the $1$-layer protocol graph.
Consider first that the input graph $\cI$ consists of just one edge $\sigma\in\cI$, 
$\sigma=\set{(B,i),(W,j)}$, where $i$ and $j$ are input values, depicted below.
\begin{center}
\begin{tikzpicture}[auto, scale=2, whitevertex/.style={draw=black,thick,circle,fill=white,inner sep=2pt,minimum size=1em}, blackvertex/.style={draw=black,thick,circle,fill=black,inner sep=2pt,minimum size=1em,text=white}]
\node[blackvertex] (b0) at (0,1) {$i$};
\node[whitevertex] (w1) at (1,1) {$j$};
\path[thick]	  (b0) edge  node[above]  {$\sigma$}   (w1);
\end{tikzpicture}
\end{center}
In the single-layer protocol graph, $\Xi(\sigma)$ is a path of three edges,
corresponding resp. to the three executions $W$, $\bot$, and $B$:%\jl{draw the graph instead?}
\begin{center}
\begin{tikzpicture}[auto, scale=2, whitevertex/.style={draw=black,thick,circle,fill=white,inner sep=2pt,minimum size=1em}, blackvertex/.style={draw=black,thick,circle,fill=black,inner sep=2pt,minimum size=1em,text=white}]
\node[blackvertex] (b0) at (0,1) {$i\emptyview$};
\node[whitevertex] (w1) at (1,1) {$ij$};
\node[blackvertex] (b2) at (2,1) {$ij$};
\node[whitevertex] (w3) at (3,1) {$\emptyview j$};
\path[thick]	  (b0) edge  node[above]  {$W$}   (w1)
          (w1) edge node[above]  {$\bot$} (b2)
          (b2) edge node[above]  {$B$} (w3);
\end{tikzpicture}
\end{center}
Formally, $\Xi(\sigma)$ is the following simplicial complex:
\begin{equation*}
  \left\{ 
  \set{(B,i\emptyview), (W,ij)}, \set{(B,ij), (W,ij)}, \set{(B,ij), (W, \emptyview j)}
  \right\},
\end{equation*}
where $(X,yz)$ denotes a vertex colored with process name $X$, which received
message $y$ from $B$, and message $z$ from $W$.

The symbol $\emptyview$ in the view of a process indicates that no message was received by that process. % (not to be confused with $\bot$ when used to describe an execution, in which case it means no message was lost). \jl{we should use another symbol.}
For example, starting with inputs $i$ and $j$, if the execution~`$B$' happens (meaning that the message sent by process $B$ was lost), then the white process receives no message from $B$ and has the view $\emptyview j$, whereas the black process receives the message from $W$ and has the view $ij$.
Notice that the white vertex $\emptyview j$ belongs to only one edge (the white process knows that the execution was `$B$'), whereas the black vertex $ij$ belows to two edges (from the point of view of the black process, the execution could be either `$B$' or `$\bot$').
\medskip

\noindent
Now suppose that the input graph $\cI$ has a second input edge $\tau = \{(B,k), (W,j)\}$.
\begin{center}
\begin{tikzpicture}[auto, scale=2, whitevertex/.style={draw=black,thick,circle,fill=white,inner sep=2pt,minimum size=1em}, blackvertex/.style={draw=black,thick,circle,fill=black,inner sep=2pt,minimum size=1em,text=white}]
\node[blackvertex] (b0) at (0,1) {$i$};
\node[whitevertex] (w1) at (1,1) {$j$};
\node[blackvertex] (b2) at (2,1) {$k$};
\path[thick]	  (b0) edge  node[above]  {$\sigma$}   (w1)
          (w1) edge  node[above]  {$\tau$}   (b2);
\end{tikzpicture}
\end{center}
Then the protocol graph for the single-layer protocol is as follows:
\begin{center}
\scalebox{0.8}{
\begin{tikzpicture}[auto, scale=2, whitevertex/.style={draw=black,thick,circle,fill=white,inner sep=2pt,minimum size=1em}, blackvertex/.style={draw=black,thick,circle,fill=black,inner sep=2pt,minimum size=1em,text=white}]
\node[blackvertex] (b0) at (0,1) {$i\emptyview$};
\node[whitevertex] (w1) at (1,1) {$ij$};
\node[blackvertex] (b2) at (2,1) {$ij$};
\node[whitevertex] (w3) at (3,1) {$\emptyview j$};
\node[blackvertex] (b4) at (4,1) {$kj$};
\node[whitevertex] (w5) at (5,1) {$kj$};
\node[blackvertex] (b6) at (6,1) {$k\emptyview$};

\path[thick]	  (b0) edge  node[above]  {$W$}   (w1)
          (w1) edge node[above]  {$\bot$} (b2)
          (b2) edge node[above]  {$B$} (w3)
          (w3) edge  node[above]  {$B$}   (b4)
          (b4) edge node[above]  {$\bot$} (w5)
          (w5) edge node[above]  {$W$} (b6);
          
\draw [decorate,decoration={brace,amplitude=10pt,mirror,raise=12pt}]
(b0.west) -- (w3.east) node [black,midway,below, yshift=-22pt] {$\Xi(\sigma)$};
\draw [decorate,decoration={brace,amplitude=10pt,mirror,raise=12pt}]
(b6.east) -- (w3.west) node [black,midway,above, yshift=22pt] {$\Xi(\tau)$};

\end{tikzpicture}
}
\end{center}

Notice how the white vertex $\emptyview j$ in the middle now belongs to two edges. Indeed, even if it knows for sure that the execution was `$B$', it did not receive the message from $B$, so it does not know whether the input value of black was~$i$ or $k$. Hence, this vertex belongs to both $\Xi(\sigma)$ and $\Xi(\tau)$.
It is remarkable that the single-layer protocol graph in this model is the same
as the input graph, except that each input edge is subdivided into three.
\medskip

In the $N$-layer message passing protocol, we iterate the previous construction $N$ times.
At each subsequent layer, each process uses its view from the previous layer as its input value for the next layer.
For example, starting with input values~$i$ and $j$, as we saw previously, the execution `$B$' in the first layer gives the two views $\set{(B,ij), (W, \emptyview j)}$.
Then the inputs for the the second layer are $i' := ij$ and $j' := \emptyview j$, and 
the $2$-layer execution `$BB$' gives the views $\set{(B,i'j'), (W, \emptyview j')} = \set{(B,ij\;\; \emptyview j), (W, \emptyview\;\; \emptyview j)}$.

It should now be clear that each subsequent layer further subdivides the edges of the previous
layer into three.  It carries each input edge $\set{(B,i),(W,j)}$ to the subgraph of
executions where $B$ starts with input $i$ and $W$ with input $j$.
Formally, 

\begin{theorem}
\label{th:mainSubdiv}
The protocol graph $\cP$
for the $N$-layer message passing model is a subdivision of the input graph, where each edge is divided into~$3^N$ edges.
The  execution carrier map $\Xi$ from $\cI$ to $\cP$,
 carries each input edge to its subdivision in the protocol graph, and
 each input vertex $(P,v)$ to the solo execution in which~$P$
finishes the protocol without hearing from the other process.
\end{theorem}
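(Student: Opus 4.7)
The plan is to prove the theorem by induction on $N$, using the single-layer analysis already worked out in the excerpt as the building block.

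\textbf{Base case} ($N=0$): No communication happens, so each process's view equals its own input. Thus $\cP_0 = \cI$, each input edge is ``subdivided'' into $3^0 = 1$ edge (itself), and $\Xi$ is the identity carrier map, sending each vertex to itself (the trivial solo execution) and each edge to itself.

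\textbf{Inductive step.} Assume that $\cP_N$ is a subdivision of $\cI$ in which each input edge $\sigma=\{(B,i),(W,j)\}$ is replaced by a path $\Xi(\sigma)$ of $3^N$ edges, and in which the two extreme vertices of $\Xi(\sigma)$ are exactly the solo-execution vertices $(B,i^{\text{solo}})$ and $(W,j^{\text{solo}})$ where $i^{\text{solo}}$ (resp.\ $j^{\text{solo}}$) is the view of $B$ (resp.\ $W$) when it never hears from the other process across $N$ layers. This endpoint invariant is what makes the subdivisions of adjacent input edges glue consistently along shared input vertices, since $i^{\text{solo}}$ depends only on $i$.

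The key geometric lemma to establish is the \emph{one-layer subdivision lemma}: for any edge $\{(B,p),(W,q)\} \in \cP_N$, running one additional layer produces precisely a path of three edges with four vertices
\[
(B, p\emptyview) \;-\; (W, pq) \;-\; (B, pq) \;-\; (W, \emptyview q),
\]
corresponding respectively to the three possible layer outcomes $W$, $\bot$, $B$ (as computed in the excerpt). Applying this lemma independently to each of the $3^N$ edges of $\Xi(\sigma)$ subdivides each of them into three, producing $3 \cdot 3^N = 3^{N+1}$ edges in the refined path for $\sigma$. The extreme endpoints of the refined path become $(B, i^{\text{solo}}\emptyview)$ and $(W, \emptyview j^{\text{solo}})$, which are exactly the $(N+1)$-layer solo-execution vertices, preserving the invariant.

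The main obstacle, and the point most worth care, is verifying that no unintended gluing occurs between refined subdivisions of different input edges. Concretely, I must check (i) that the interior vertices produced by the one-layer lemma are uniquely determined by the pair of views of the current edge and therefore cannot coincide with interior vertices of the subdivision of a different edge of $\cP_N$, and (ii) that the only vertex shared between the refined subdivisions of two input edges that meet at $(B,i) \in \cV(\cI)$ is the solo vertex $(B, i^{\text{solo}(N+1)})$, since any view of $B$ that incorporates information about $W$'s input cannot arise in a solo execution. Once this is checked, the carrier-map properties of $\Xi$ (monotonicity, name-preservation, and carrying the solo vertex $(P,v)$ to its solo execution) follow directly from the inductive construction.
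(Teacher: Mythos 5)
Your induction on $N$, with the single-layer three-fold subdivision as the inductive step and the uniqueness of views (your checks (i) and (ii)) ruling out unintended gluings, is exactly the argument the paper intends: the appendix presents the one-layer picture, says that iterating it subdivides each edge into three at every layer, and supplies the view recursion together with Lemma~\ref{lem:equal_views} to guarantee that distinct executions yield distinct edges. Your proposal is a correct and somewhat more explicit write-up of the same approach, including the right identification of the solo-execution vertices as the shared endpoints.
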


For completeness, we define formally by induction on~$N$ the \emph{view} of a process in a $N$-layer execution, starting from the input edge $\sigma = \{(B,i), (W,j)\}$.
For $N=0$, we have $\view_B(\varnothing, \sigma) = i$ and ${\view_W(\varnothing, \sigma) = j}$.
Let $\alpha x$ be a $N$-layer execution, where $\alpha$ is a $(N-1)$-layer execution and $x \in \{B, W, \bot\}$. Assume that, after the execution $\alpha$ occurred, the view of $B$ is $V_B := \view_B(\alpha, \sigma)$, and the view of $W$ is $V_W := \view_W(\alpha, \sigma)$.
Then:
\begin{itemize}
\item if $x = B$, then $\view_B(\alpha x, \sigma) = (V_B, V_W)$ and $\view_W(\alpha x, \sigma) = (\emptyview, V_W)$.
\item if $x = W$, then $\view_B(\alpha x, \sigma) = (V_B, \emptyview)$ and $\view_W(\alpha x, \sigma) = (V_B, V_W)$.
\item if $x = \bot$, then $\view_B(\alpha x, \sigma) = (V_B, V_W)$ and $\view_W(\alpha x, \sigma) = (V_B, V_W)$.
\end{itemize}

\noindent
Finally, the protocol graph $\cP_N$ of the $N$-layer message passing protocol with input graph $\cI$ has vertices of the form $(X, \view_X(\alpha, \sigma))$ with $X \in \{B, W\}$, $\alpha \in \{B, W, \bot\}^N$ and $\sigma \in \cI$.
Its edges are of the form $\{(B, \view_B(\alpha, \sigma)), (W, \view_W(\alpha, \sigma)) \}$, with $\alpha \in \{B, W, \bot\}^N$ and $\sigma \in \cI$.
Thus, a vertex of color $X$ belongs to two such edges whenever $\view_X(\alpha, \sigma) = \view_X(\beta, \tau)$, for some $\alpha, \beta, \sigma, \tau$.
The carrier map $\Xi : \cI \to 2^\cP$ takes each edge $\sigma \in \cF(\cI)$ to the subgraph of $\cP$ containing all the vertices of the form $(X, \view_X(\alpha, \sigma))$, and all edges between them.
\medskip

Two different choices for the pair $(\alpha, \sigma)$ will necessarily give rise to two distinct edges in $\cP$, as stated in the following Lemma:

\begin{lemma}
\label{lem:equal_views}
For all $N, \alpha, \beta, \sigma, \tau$, if $\view_B(\alpha, \sigma) = \view_B(\beta, \tau)$ and $\view_W(\alpha, \sigma) = \view_W(\beta, \tau)$, then $\alpha = \beta$ and $\sigma = \tau$.
\end{lemma}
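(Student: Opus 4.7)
The plan is to proceed by induction on the number of layers $N$, exploiting the fact that the last symbol of an execution can be reconstructed from just the pair of final views.

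For the base case $N = 0$, we have $\alpha = \beta = \varnothing$ and $\view_B(\varnothing, \sigma)$, $\view_W(\varnothing, \sigma)$ are exactly the input values $\sigma_B, \sigma_W$. Equality of the two views thus forces $\sigma_B = \tau_B$ and $\sigma_W = \tau_W$, so $\sigma = \tau$.

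For the inductive step, write $\alpha = \alpha' x$ and $\beta = \beta' y$ with $x, y \in \{B, W, \bot\}$, and let $V_B = \view_B(\alpha', \sigma)$, $V_W = \view_W(\alpha', \sigma)$ (and analogously $U_B, U_W$ for $\beta', \tau$). The central observation is that $x$ is determined by the pair $(\view_B(\alpha, \sigma), \view_W(\alpha, \sigma))$ alone. Indeed, inspecting the inductive definition:
\begin{itemize}
\item if $x = B$, the top-level first component of $\view_W(\alpha,\sigma)$ is $\emptyview$;
\item if $x = W$, the top-level second component of $\view_B(\alpha,\sigma)$ is $\emptyview$;
\item if $x = \bot$, both final views equal $(V_B, V_W)$ with no top-level $\emptyview$.
\end{itemize}
These three cases are mutually exclusive, provided that no view $V_B$ or $V_W$ appearing in a recursive position is itself equal to $\emptyview$. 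This is a trivial sub-claim: at level $0$ a view is an input value (which we assume is distinct from $\emptyview$), and at any level $\ge 1$ a view is a pair. Hence from the hypothesis $\view_B(\alpha,\sigma) = \view_B(\beta,\tau)$ and $\view_W(\alpha,\sigma) = \view_W(\beta,\tau)$, we deduce $x = y$.

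Once $x = y$ is established, the same case analysis lets us read off $V_B, V_W$ from the final views, and these must respectively equal $U_B, U_W$. Thus $\view_B(\alpha', \sigma) = \view_B(\beta', \tau)$ and $\view_W(\alpha', \sigma) = \view_W(\beta', \tau)$, so by the induction hypothesis $\alpha' = \beta'$ and $\sigma = \tau$. Combining with $x = y$ yields $\alpha = \beta$, completing the induction. There is no real obstacle here; the only subtle point is the sub-claim that proper views never equal the placeholder $\emptyview$, which makes the case discrimination on the last symbol unambiguous.
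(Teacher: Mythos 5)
Your proof is correct and follows the same route as the paper, whose own proof consists only of the words ``by induction on $N$''; your case analysis recovering the last symbol $x$ from the pair of final views, supported by the observation that a genuine view is never the bare placeholder $\emptyview$ (it is either an input value or a pair), supplies exactly the details the paper leaves implicit. No gap remains.
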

\begin{proof}
By induction on $N$.
\qed
\end{proof}

\begin{lemma}
\label{lem:model_isomorphic}
The protocol graph $\cP_N$ for the $N$-layer message-passing protocol is isomorphic to the product update model $\cI[\IS]$ defined in Section~\ref{sec:layered_model}.
\end{lemma}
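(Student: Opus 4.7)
The plan is to exhibit an explicit bijection between facets that extends to an isomorphism of chromatic simplicial models. A facet of $\cI[\IS]$ is, by the product update construction, of the form $(\sigma, (\alpha, \sigma))$ with $\sigma \in \cF(\cI)$ and $\alpha \in L_N$, since the precondition $\pre(\alpha, \sigma)$ holds exactly in the input facet $\sigma$. A facet of $\cP_N$ is the edge $\{(B, \view_B(\alpha, \sigma)), (W, \view_W(\alpha, \sigma))\}$, also indexed by such a pair $(\alpha, \sigma)$. Sending the former to the latter is bijective on facets: injectivity follows immediately from Lemma~\ref{lem:equal_views} (different pairs give different pairs of views), and surjectivity is built into the definition of $\cP_N$.

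The real content lies in showing that this bijection extends consistently to vertices, i.e.\ that the pattern of shared vertices matches on both sides. I would map the $a$-colored vertex $(a, \view_a(\alpha, \sigma))$ of $\cP_N$ to the vertex $((a, \sigma_a), [(\alpha, \sigma)]_{\sim_a})$ of $\cI[\IS]$, where $\sigma_a$ denotes the $a$-colored input of $\sigma$. The main obstacle, and essentially the heart of the proof, is to establish that $\view_a(\alpha, \sigma) = \view_a(\beta, \tau)$ if and only if $(\alpha, \sigma) \sim_a (\beta, \tau)$; this single equivalence ensures both well-definedness and injectivity of the vertex map, and guarantees that two facets of $\cP_N$ share an $a$-colored vertex exactly when the corresponding facets of $\cI[\IS]$ do. I would prove it by induction on $N$, mirroring the inductive structure of the definition of $\sim_a$ in Section~\ref{sec:layered_model}.

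In the base case $N = 0$, both sides reduce to $\sigma_a = \tau_a$ and are equivalent by definition. In the inductive step, taking $a = B$ without loss of generality, I would split on the last symbols $p, q$ of the executions $\alpha \cdot p$ and $\beta \cdot q$. When $p = q = W$, the $B$-view simply pads its previous value with $\emptyview$, so the equivalence reduces to the inductive hypothesis, matching clause~$(i)$ of the definition of $\sim_B$. When $p, q \in \{\bot, B\}$, the $B$-view appends the full $W$-view, which by Lemma~\ref{lem:equal_views} forces $\alpha = \beta$ and $\sigma = \tau$, matching clause~$(ii)$. The mixed cases ($p = W$ and $q \neq W$, or vice versa) contradict both conditions, since the $B$-view then carries $\emptyview$ as its $W$-component in one execution and a genuine $W$-view in the other. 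Once this equivalence is in hand, it remains only to verify that the map preserves colors (both vertices are colored by $a$ by construction) and labels (the atomic proposition $\inputprop{a}{\sigma_a}$ matches the input value recorded at the base of $\view_a(\alpha, \sigma)$), which is routine and completes the isomorphism of simplicial models.
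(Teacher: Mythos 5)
Your proposal is correct and takes essentially the same route as the paper's proof: the heart in both cases is the equivalence $\view_a(\alpha,\sigma)=\view_a(\beta,\tau) \iff (\alpha,\sigma)\sim_a(\beta,\tau)$, established by induction on $N$ with the same case split on the last symbols (including the observation that a view is never just $\emptyview$ to rule out the mixed cases) and the same appeal to Lemma~\ref{lem:equal_views} in the $\{\bot,B\}$ case, after which the identification of vertices and facets is routine.
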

\begin{proof}
The main property that we need to prove is the following, for any number of layers~$N$, for all $\alpha, \beta \in L_N$ and $\sigma, \tau \in \cF(\cI)$:
\begin{equation}
\view_B(\alpha, \sigma) = \view_B(\beta, \tau) \; \iff \; (\alpha, \sigma) \sim_B (\beta, \tau)
\end{equation}
and similarly for $\view_W$ and $\sim_W$.
We prove it by induction on $N$.
\begin{itemize}
\item For $N=0$, this holds from the definitions.
\item Let $\alpha x$ and $\beta y$ be executions of length $N+1$, with $x, y \in \{B, W, \bot\}$.
\medskip

($\Leftarrow$): Assume $(\alpha x, \sigma) \sim_B (\beta y, \tau)$. By definition, there are two possible cases. 
  \begin{itemize}
  \item $x = y = W$ and $(\alpha, \sigma) \sim_B (\beta, \tau)$. By induction hypothesis, we get $\view_B(\alpha, \sigma) = \view_B(\beta, \tau)$. Let us write $V_B$ for this view.
  Then, by definition, we get $\view_B(\alpha x, \sigma) = (V_B, \emptyview)$ and $\view_B(\beta y, \tau) = (V_B, \emptyview)$, which are equal.
  \item $x, y \in \{\bot, B\}$ and $\sigma = \tau$ and $\alpha = \beta$.
  We write $V_B = \view_B(\alpha, \sigma) = \view_B(\beta, \tau)$, and $V_W = \view_W(\alpha, \sigma) = \view_W(\beta, \tau)$.
  Then by definition $\view_B(\alpha x, \sigma) = (V_B, V_W) = \view_B(\beta y, \tau)$.
  \end{itemize}

($\Rightarrow$): Assume $\view_B(\alpha x, \sigma) = \view_B(\beta y, \tau)$.
Since a view can never be just `$\emptyview$' (it is either an input value, or a pair of values), we can never have $V_W = \emptyview$. Thus the equality can be true only in one of the two following cases.
  \begin{itemize}
  \item Either $x = y = W$. In which case, we get $\view_B(\alpha, \sigma) = \view_B(\beta, \tau)$ by identifying the first components of the views, and by induction hypothesis, $(\alpha, \sigma) \sim_B (\beta, \tau)$. Therefore, $(\alpha x, \sigma) \sim_B (\beta y, \tau)$.
  \item Or $x, y \in \{\bot, B\}$. By identifying the components of the pair, we get $\view_B(\alpha, \sigma) = \view_B(\beta, \tau)$ and $\view_W(\alpha, \sigma) = \view_W(\beta, \tau)$. By Lemma~\ref{lem:equal_views} we deduce $\alpha = \beta$ and $\sigma = \tau$, and thus we obtain $(\alpha x, \sigma) \sim_B (\beta y, \tau)$.
  \end{itemize}
\end{itemize}
The proof for the correspondence between $\view_W$ and $\sim_W$ is similar, with the role of $B$ and $W$ reversed.

Once we have shown (1), proving the Lemma is just a matter of unfolding the definitions of the product update model.
A vertex of $\cI[\IS]$ is formally given by a pair $(v, E)$, where $v$ is a vertex of $\cI$ (say, of color $B$) and $E$ is an equivalence class of $\sim_B$.
Let $(\alpha, \sigma) \in E$ be an action in $E$. Then, to the vertex $(v,E)$ of $\cI[\IS]$ we associate the vertex $(B, \view_B(\alpha, \sigma))$ of $\cP_N$; this is well-defined thanks to (1). Surjectivity is obvious.
To show injectivity, assume that $(v,E)$ and $(v',E')$ give the same view. By (1), we get $E=E'$; and $v=v'$ according to the precondition $\pre$ of the action model.
Finally, in $\cI[\IS]$, we get an edge between $(v_B,E_B)$ and $(v_W,E_W)$ whenever there is an action $t = (\alpha, \sigma)$ with $\sigma := \{v_B, v_W\}$ and $t \in E_B$ and $t \in E_W$. Then by definition of our bijection between the vertices, this edge is sent to the pair $\{ (B, \view_B(\alpha, \sigma)), (W, \view_W(\alpha, \sigma)) \}$, which is also an edge.
\qed
\end{proof}

\subsection{Solvability of the equality negation task}
\label{sec:app:negation}

The equality negation task has been described in the DEL framework in Section~\ref{sec:negationTask}. For completeness, and comparison of the two approaches,
we study it here using the usual combinatorial topology approach.
Also, we present a direct proof of the impossibility of solving it in the layered message-passing model.

\subsubsection{Definition of equality negation}
Recall that in the equality negation task for two processes, $\Agents = \{\B,\W\}$,
each process starts with an input value in the set $\{0, 1, 2\}$, and has
to irrevocably decide on a value~$0$ or~$1$, such that the decisions of the two
processes are the same if and only if their input values are different.

The equality negation task is fomralized as a triple $(\cI, \cO, \Delta)$.
The input complex,  $\I$, has vertices
 $\vertices{\I} = \Agents \times \{0, 1, 2\}$, and  edges of the form $\{(B,i), (W,j)\}$ for all $i, j\in \{0, 1, 2\}$.
Thus, each vertex is a pair, whose first component is the name of a process, and
whose second component is the input value of the process.
Similarly, the output complex, $\O$, has vertices
 $\vertices{\O} = \Agents \times \{0, 1\}$, and edges of the form $\{(B,i), (W,j)\}$ for all $i, j\in \{0, 1\}$.
%
%
%The input model $\I$ is represented below. In the picture, a vertex
%${(p,i) \in \vertices{I}}$ is represented as a vertex of color $p$ with
%value $i$.
%Consensus is traditionally specified in terms of two requirements, \emph{validity}
%and \emph{agreement}. The first requires that an output should be equal to
%the input of some process, and the second requires that both outputs are always equal.
The carrier map $\Delta$ is as follows
\begin{equation*} \label{eq1}
\begin{split}
\Delta (\set{ (P,i)}) & =\set{(P,0),(P,1)}, P\in\set{B,W}  \\
\Delta (\set{ (B,i),(W,i)})  & =  \set{ \set{(B,0),(W,1)},\set{(B,1),(W,0)} } \cup \vertices{\O}\\
\Delta (\set{ (B,i),(W,j)})  & =  \set{ \set{(B,0),(W,0)},\set{(B,1),(W,1)} }\cup \vertices{\O}, i\neq j
\end{split}
\end{equation*}
Notice that $\Delta$ satisfies the monotonicity property and is  name-preserving.
\begin{center}
\begin{tikzpicture}[auto, scale=1.5]

  \node[label={Input complex}] (I) at (0,0) {
  \begin{tikzpicture}[auto, scale=1.5, whitevertex/.style={draw=black,thick,circle,fill=white,inner sep=2pt,minimum size=1em}, blackvertex/.style={draw=black,thick,circle,fill=black,inner sep=2pt,minimum size=1em,text=white}]
  \node[whitevertex] (w0) at (0,0) {$0$};
  \node[whitevertex] (w1) at (1,1) {$1$};
  \node[whitevertex] (w2) at (2,0) {$2$};
  \node[blackvertex] (b0) at (0,1) {$0$};
  \node[blackvertex] (b1) at (1,0) {$1$};
  \node[blackvertex] (b2) at (2,1) {$2$};
  \path (w0) edge (b0) edge (b1) edge (b2)
	    (w1) edge (b0) edge (b1) edge (b2)
        (w2) edge (b0) edge (b1) edge (b2);
  \end{tikzpicture}
  };
  
  \node[label={Output complex}] (O) at (5,0) {
  \begin{tikzpicture}[auto, scale=1.5, whitevertex/.style={draw=black,thick,circle,fill=white,inner sep=2pt,minimum size=1em}, blackvertex/.style={draw=black,thick,circle,fill=black,inner sep=2pt,minimum size=1em,text=white}]
  \node[whitevertex] (w0) at (0,0) {$0$};
  \node[whitevertex] (w1) at (1,1) {$1$};
  \node[blackvertex] (b0) at (0,1) {$0$};
  \node[blackvertex] (b1) at (1,0) {$1$};
  \path (w0) edge (b0) edge (b1)
	    (w1) edge (b0) edge (b1);
  \end{tikzpicture}
  };
  
  \draw[-latex, line width=2mm, shorten >=8pt, shorten <=8pt, blue!30, text=black] (I) -- node {Carrier map $\Delta$} (O);
\end{tikzpicture}
\end{center}

\begin{remark}
Notice that we do not have an analogue of Lemma~\ref{lem:model_isomorphic} here: the output complex is distinct from the product update model $\cI[\cT]$ of Section~\ref{sec:DEL_analysis}.
\end{remark}

\subsubsection{Equality negation impossibility}
\paragraph{Direct impossibility}

Here we show that there is no solution to the equality negation task 
in the layered message-passing model, using the usual combinatorial topology approach.

Assume for contradiction, that there is a solution, with some number of layers, $N$.
Then, by Theorem~\ref{th:mainSubdiv}, 
the protocol graph $\cP$
for the $N$-layer message passing model is a subdivision of the input graph, where each edge is divided into~$3^N$ edges.
The  execution carrier map $\Xi$ from $\cI$ to $\cP$,
 carries each input edge to its subdivision in the protocol graph, and
 each input vertex $(P,v)$ to the solo execution in which $P$
finishes the protocol without hearing from the other process.
Also,  Definition~\ref{def:solving} states that
if $\cP$ solves equality negation, then  
there exists a simplicial {decision map} $\delta$ from $\cP$ to $\cO$
such that $\delta \circ \Xi$ is carried by $\Delta$.

Consider the subgraph $\cG$ of $\I$ induced by all edges that have distinct input values.
Notice that $\cG$ is connected. Thus, $\Xi(\cG)$, is connected, by Theorem~\ref{th:mainSubdiv}.
Recall that Lemma~\ref{lem:connectSM} states that the
 image of a connected graph  under a simplicial map is
connected.
Thus, $\delta(\Xi(\cG))$ is a connected subgraph of $\O$.

The specification of the equality negation task states that the decisions should
be equal, for all of $\cG$. But the subgraph of $\O$ of edges with the same
decision is disconnected. Therefore,  $\delta(\Xi(\cG))$ must be equal to one of the edges
in this subgraph, without loss of generality, 
$\delta(\Xi(\cG))$ is equal to the subgraph of $\O$ with the single edge  $\set{(B,0),(W,0)}$.

It follows that 
\begin{equation} \label{eq2}
\begin{split}
\delta(\Xi(B,0)) & = (B,0) \\
\delta(\Xi(W,0)) & = (W,0) .
\end{split}
\end{equation}
%$\delta(\Xi(B,0)) = (B,0)$ and  $\delta(\Xi(W,0)) = (W,0)$.
Now consider the input edge $ \set{(B,0),(W,0)} $.
Notice that  $\Xi(  \set{(B,0),(W,0)}  )$ is connected, by Theorem~\ref{th:mainSubdiv}.
Thus, $\delta (\Xi(  \set{(B,0),(W,0)}  ))$ is equal to one of the connected subgraphs
of $\O$ with distinct output values, without loss of generality,
the one with edge $ \set{(B,0),(W,1)}$. Which implies that $\delta(\Xi(W,0))  = (W,1)$.
But this contradicts \ref{eq2} above, which states that $\delta(\Xi(W,0))  = (W,0)$.
 
%Namely, all vertices of $B$ are mapped to $(B,0)$, and all vertices of $W$ are
%mapped to $(W,0)$.
%Since $\delta$ is a simplicial map, it send edges to edges,
%in particular, $\delta( \set{(B,0),(W,0)} ) = \set{ (B,0),(W,0) }$, 
%violating the requirement that different outputs have to be produced when 
%both inputs are $0$.

\paragraph{Impossibility by reduction to consensus}
We  describe the impossibility of solving equality negation by reduction
to consensus, presented in~\cite{DBLP:journals/siamcomp/LoH00}.
The setting there is a shared memory system where processes communicate by reading
and writing shared registers, which is equivalent to our
layered message-passing model. 

The proof of~\cite{DBLP:journals/siamcomp/LoH00} is as follows.
They call the two processes $P_0,P_1$. 
They assume for  contradiction,
that  there is an equality negation algorithm, $Aen$, in the layered message-passing 
model. 
Then,  they consider a \emph{solo} execution of each of the two processes, $P_0,P_1$,
of $Aen$, where $P_k$ produces an output value without ever having heard
of the input value of the other process.
Noticing that a solo execution is deterministic, when $P_k$ executes alone
with input value $v$, 
and the output value of $P_k$ is uniquely defined, denoted $\Delta_k^v$.
Then they observe that there are inputs $\alpha_0,\alpha_1 \in\{ 0,1,2\}$ for $P_0,P_1$ resp.
such that either 
\begin{description}
\item[(i)] $\alpha_0 = \alpha_1$ and
$\Delta_0^v = \Delta_1^v$ or 
\item[(ii)]  $\alpha_0\neq \alpha_1$ and
$\Delta_0^v\neq \Delta_1^v$.
\end{description}

Then, they describe the following algorithm that solves consensus, based on $Aen$,
for two processes $Q_0,Q_1$, each one starting with its own input $inp_k$
and deciding a consensus value $out_k$.
Processes execute $Aen$, each one, $Q_k$, invoking it with input $\alpha_k$,
and they each get back an equality negation value, $w_k$.
Recall that the layered message-passing model is full-information,
so when processes execute $Aen$, they also pass to each other their inputs, $inp_k$.
Finally, process $Q_k$ decides a consensus value $out_k$ according to the rule:
\begin{enumerate}
\item \label{first}
if $w_k= \Delta_k^{\alpha_k}$ then decide its own input, $out_k:=inp_k$,
 \item \label{nosolo}
 else decide the input value of the other process, $out_k:=inp_{(1-k)}$.
 \end{enumerate}
 The proof is as follows.
 For validity, we need to check only case \ref{nosolo}., and to notice
 that the execution of $Aen$ could not have been a solo execution, and hence indeed~$Q_k$
 has received the input value of the other process.
 To prove agreement, consider the values $w_0,w_1$ in an execution where both processes
 decide. Consider the execution simulated of $Aen$ by $Q_0,Q_1$: 
 by the negation property, $w_0=w_1$ iff $\alpha_0\neq \alpha_1$. 
 But by definition of $\alpha_0,\alpha_1$,  $w_0=w_1$ iff  $\Delta_0^{\alpha_0}\neq \Delta_1^{\alpha_1}$.
  Thus, if one process $Q_k$ decides according to case~\ref{first}, the other process has to decide according to case~\ref{nosolo}, and reciprocally. So, they both decide the same output.
  
 %Because otherwise, $P_k$ would have decided $\Delta_k^{\alpha^k}$ in that execution.
 
 Thus, there is a consensus algorithm in the layered message-passing model, 
contradicting the classic   consensus impossibility of~\cite{Chor:1987:PCU:41840.41848,LA87}.

\end{document}